\documentclass[a4paper]{article}

\usepackage[utf8]{inputenc}

\usepackage{fullpage}

\usepackage{subfigure}
\usepackage[sort,numbers]{natbib}

\usepackage[shortlabels]{enumitem}

\usepackage{amsmath,amssymb,latexsym,mathtools}
\usepackage{amsthm}

\usepackage{ifpdf}
\ifpdf
\usepackage[pdftex]{graphicx}
\else
\usepackage{graphicx}
\fi

\let\O=\Omega
\let\o=\omega

\let\wto=\rightharpoonup

\providecommand{\st}{\text{ such that }}
\providecommand{\aee}{\text{a.e.}}

\let\str\epsilon 

\renewcommand{\theta}{\vartheta} 
\renewcommand{\rho}{\varrho} 
\renewcommand{\phi}{\varphi}

\renewcommand{\theta}{\vartheta} 
\renewcommand{\phi}{\varphi}
\let\e\varepsilon

\providecommand{\ue}{u^\e}

\providecommand{\vexp}{v^\e_{x'}}

\providecommand{\uea}{u^\e_\alpha}
\providecommand{\uet}{u^\e_3}
\providecommand{\ut}{{u_3}} \providecommand{\vt}{{v_3}} \providecommand{\Mtwo}{{\R}^{2\times2}}
\providecommand{\MtwoSym}{\Mtwo_{\textrm{sym}}}

\providecommand{\pt}{\partial_3}

\providecommand{\pa}{\partial_\alpha}
\providecommand{\pb}{\partial_\beta}
\providecommand{\paa}{\partial_{\alpha\alpha}}
\providecommand{\pab}{\partial_{\alpha\beta}}
\providecommand{\pbb}{\partial_{\beta\beta}}

\providecommand{\vb}{v{_\beta}}
\providecommand{\ua}{u_{\alpha}}
\providecommand{\va}{v{_\alpha}}

\providecommand{\eab}{\mathrm e_{\alpha\beta}}
\providecommand{\eaa}{\mathrm e_{\alpha\alpha}}
\providecommand{\ebb}{\mathrm e_{\beta\beta}}
\providecommand{\eat}{\mathrm e_{\alpha 3}}
\providecommand{\ett}{\mathrm e_{33}}
\providecommand{\eabe}{\eab^\e}

\providecommand{\eite}{\eit^\e}
\providecommand{\eate}{\eat^\e}
\providecommand{\ette}{\ett^\e}
\providecommand{\limeab}{e_{\alpha\beta}}
\providecommand{\limett}{e_{33}}

\newcommand{\hf}{h_f}
\newcommand{\hb}{h_b}

\newcommand{\muf}{\mu_f}
\newcommand{\mub}{\mu_b}

\newcommand{\laf}{\lambda_f}
\newcommand{\lab}{\lambda_b}

\newcommand{\Oe}{\O^\e}
\newcommand{\Of}{{\O_f}}
\newcommand{\Ob}{{\O_b}}
\newcommand{\Oef}{\O_f^\e}
\newcommand{\Oeb}{\O_b^\e}

\newcommand{\Ofilmd}{\O_f}
\newcommand{\OF}{\Ofilmd}
\newcommand{\Obondd}{\O_b}
\newcommand{\OB}{\Obondd}

\newcommand{\PeOe}{\mathcal P_\e(\Oe)}
\newcommand{\PeO}{\mathcal P_\e(\e; \O)}

\let\a=\alpha
\let\b=\beta
\let\g=\gamma
\let\d=\delta

\renewcommand{\H}{H}

\newcommand{\R}{\mathbb R}

\providecommand{\Cad}{\mathcal C}
\newcommand{\Cadw}{\Cad_\dispload}
\providecommand{\Cadz}{\Cad_0}

\providecommand{\CadKL}{\Cad_\textrm{KL}}

\providecommand{\CadrelKL}{\hat{\mathcal{C}}_\textrm{KL}}

\newcommand{\AL}[1]{}

\newcommand{\test}[1]{\hat{#1}}

\newcommand{\clo}[1]{\overline{#1}}

\providecommand{\dispload}{{w}}

\newtheorem{hypothesis}{Hypothesis}

\providecommand{\inn}{\text{ in }}
\providecommand{\onn}{\text{ on }}

\providecommand{\tr}{\mathrm{tr}}

\providecommand{\ie}{i.e.~}

\providecommand{\eg}{e.g.~}

\providecommand{\st}[0]{\superscript{st}}

\providecommand{\nl}[2][]{\left\| #1 \right\|_{#2}}

\providecommand{\pt}{\partial_3}

\providecommand{\pa}{\partial_\alpha}
\providecommand{\pb}{\partial_\beta}
\providecommand{\paa}{\partial_{\alpha\alpha}}
\providecommand{\pab}{\partial_{\alpha\beta}}
\providecommand{\pbb}{\partial_{\beta\beta}}

\providecommand{\vb}{v{_\beta}}
\providecommand{\ua}{u_{\alpha}}
\providecommand{\va}{v{_\alpha}}

\providecommand{\eab}{\mathrm e_{\alpha\beta}}
\providecommand{\eaa}{\mathrm e_{\alpha\alpha}}
\providecommand{\ebb}{\mathrm e_{\beta\beta}}
\providecommand{\ett}{\mathrm e_{33}}
\providecommand{\eat}{\mathrm e_{\alpha 3}}
\providecommand{\eit}{\mathrm e_{i 3}}

\providecommand{\strload}{{\Phi}} \providecommand{\tstrload}{{\widetilde\Phi}} \providecommand{\dispload}{{w}} 

\let\wto\rightharpoonup

\newcommand{\elast}{\mathcal A}

\newcommand{\Ae}{\strload^\e}

\newcommand{\pe}{p^\e}

\newcommand{\Aaa}{\strload_{\alpha\alpha}}

\newcommand{\Aab}{\strload_{\alpha\beta}}
\newcommand{\Att}{\strload_{33}}

\newcommand{\bAaa}{\bar\strload_{\alpha\alpha}}

\newcommand{\bAab}{\bar\strload_{\alpha\beta}}
\newcommand{\bAtt}{\bar\strload_{33}}

\providecommand{\ktt}[1]{\kappa_{33}^\e(#1)}
\providecommand{\kaa}[1]{\kappa_{\alpha \alpha}^\e(#1)}
\providecommand{\kbb}[1]{\kappa_{\beta\beta}^\e(#1)}
\providecommand{\kab}[1]{{\kappa}_{\alpha\beta}^\e(#1)}
\providecommand{\kta}[1]{\kappa_{3 \alpha}^\e(#1)}
\providecommand{\ke}[1]{\kappa^\e(#1)}
\providecommand{\kev}{{\ke{v}}}
\providecommand{\kttv}{\ktt{v}}
\providecommand{\kaav}{\kaa{v}}
\providecommand{\kbbv}{\kbb{v}}
\providecommand{\kabv}{\kab{v}}
\providecommand{\ktav}{\kta{v}}
\providecommand{\kue}{   \kappa^\e(\ue)}
\providecommand{\kttue}{ \kappa_{33}^\e}
\providecommand{\kaaue}{ \kappa_{\alpha \alpha}^\e}

\providecommand{\kabue}{{\kappa}_{\alpha\beta}^\e}
\providecommand{\ktaue}{ \kappa_{3 \alpha}^\e}

\providecommand{\kebl}[1]{\hat\kappa^\e(#1)}
\providecommand{\kttbl}[1]{\hat{\kappa}_{33}^\e(#1)}
\providecommand{\kaabl}[1]{\hat{\kappa}_{\alpha \alpha}^\e(#1)}
\providecommand{\kbbbl}[1]{\hat{\kappa}_{\beta\beta}^\e(#1)}
\providecommand{\kabbl}[1]{{\hat{\kappa}}_{\alpha\beta}^\e(#1)}
\providecommand{\ktabl}[1]{\hat{\kappa}_{3 \alpha}^\e(#1)}

\providecommand{\kevbl}{{\kebl{v}}}
\providecommand{\kttblv}{\kttbl{v}}
\providecommand{\kaablv}{\kaabl{v}}
\providecommand{\kbbblv}{\kbbbl{v}}
\providecommand{\kabblv}{\kabbl{v}}
\providecommand{\ktablv}{\ktabl{v}}

\providecommand{\kblue}{\hat \kappa^\e(\ue)}
\providecommand{\kttblue}{\hat \kappa_{33}^\e(\ue)}
\providecommand{\kaablue}{\hat \kappa_{\alpha \alpha}^\e(\ue)}

\providecommand{\kabblue}{\hat \kappa_{\alpha\beta}^\e(\ue)}
\providecommand{\ktablue}{\hat \kappa_{3 \alpha}^\e(\ue)}

\providecommand{\limkbl}{\hat k}
\providecommand{\limkttbl}{\hat k_{33}}
\providecommand{\limkaabl}{\hat k_{\alpha \alpha}}

\providecommand{\limkabbl}{\hat k_{\alpha\beta}}

\providecommand{\limkf}{k}
\providecommand{\limkijf}{k_{ij}}
\providecommand{\limkttf}{k_{33}}
\providecommand{\limkaaf}{k_{\alpha \alpha}}

\providecommand{\limkabf}{{k}_{\alpha\beta}}
\providecommand{\limktaf}{k_{3 \alpha}}

\providecommand{\io}{\int_{\o}}
\providecommand{\iof}{\int_{\OF}}
\providecommand{\iob}{\int_{\OB}}

\let\O=\Omega
\let\o=\omega

\newcommand{\op}{\o_+}
\newcommand{\om}{\o_-}
\newcommand{\oem}{\o_-^\e}
\newcommand{\oep}{\o_+^\e}

\newcommand{\oz}{\o_0}

\usepackage[noblocks]{authblk}

\newtheorem{theorem}{Theorem}[section]

\newtheorem{lemma}{Lemma}[section]

\theoremstyle{remark}
\newtheorem{remark}{Remark}[section]

\usepackage{color}

\title{On the asymptotic derivation of Winkler-type energies from 3D elasticity}

\author[1]{Andr\'es A. Le\'on Baldelli\thanks{Electronic address: \texttt{leonbaldelli@maths.ox.ac.uk}; Corresponding author}}

\author[1,2]{B. Bourdin\thanks{Electronic address: \texttt{bourdin@lsu.edu};}}

\affil[1]{Center for Computation \& Technology, Louisiana State University, Baton Rouge LA 70803, USA}
\affil[2]{Department of Mathematics, Louisiana State University, Baton Rouge LA 70803, USA}

\begin{document}

\maketitle

\begin{abstract}
We show how bilateral, linear, elastic foundations (i.e. Winkler foundations) often regarded as heuristic, phenomenological models, emerge asymptotically from standard, linear, three-dimensional elasticity.
We study the parametric asymptotics of a non-homogeneous linearly elastic bi-layer attached to a rigid substrate as its thickness vanishes, for varying thickness and stiffness ratios.
By using rigorous arguments based on energy estimates, we provide a first rational and constructive justification of reduced foundation models.
We establish the variational weak convergence of the three-dimensional elasticity problem to a two-dimensional one, of either a ``membrane over in-plane elastic foundation'', or a ``plate over transverse elastic foundation''.
These two regimes are function of the only two parameters of the system, and a phase diagram synthesizes their domains of validity.
Moreover, we derive explicit formul\ae\ relating the effective coefficients of the elastic foundation to the elastic and geometric parameters of the original three-dimensional system.
\end{abstract}

\section{Introduction} \makeatletter{}
\label{sec:introduction}

We focus on models of linear, bilateral, elastic foundations, known as ``Winkler foundations'' (\cite{Winkler1867}) in the engineering community. 
Such models are commonly used to  account for the bending of beams supported by elastic soil, represented by a continuous bed of mutually independent, linear, elastic, springs. 
They involve a single parameter,  the ratio between the ``bending modulus'' of the beam and the ``equivalent stiffness'' of the elastic foundation, henceforth denoted by $k$.
As a consequence, the pressure $q(x)$ exerted by the elastic foundation at a given point in response to the vertical displacement $u(x)$ of the overlying beam, takes the simple form:
\begin{equation}
    \label{eqn:winkler_const_eq}
		q(x)=K u(x).
\end{equation}
Such type of foundations, straightforwardly extended to two dimensions, have found application in the study of 
the static and dynamic response of embedded caisson foundations~\cite{Gerolymos2006,Zafeirakos2013}, 
supported shells~\citep{Paliwal1996}, 
filled tanks~\cite{Amabili1998},
free vibrations of nanostructured plates~\cite{Pradhan2009}, 
pile bending in layered soil~\cite{Sica2013}, 
seismic response of piers~\cite{Chen2003}, 
carbon nanotubes embedded in elastic media~\cite{Pradhan2011},
chromosome function~\cite{Kleckner2004}, etc.
Analogous reduced models, labeled ``shear lag'', have been employed after the original contribution of~\cite{Cox1952a} to analyze the elastic response of matrix-fiber composites under different material and loading conditions, see~\cite{Hutchinson1990,Nairn1997,Nairn2001,Jiang2008} and references therein.

Linear elastic foundation models have also kindled the interest of the theoretical mechanics community.
Building up on these models, the nonlinear response of complex systems has been studied in the context of formation of geometrically involved wrinkling buckling modes in thin elastic films over compliant substrates~\cite{Audoly2008a,Audoly2008b,Audoly2008c}, in the analysis of fracture mechanisms in thin film systems~\cite{Xia2000}, further leading to the analysis of the emergence of quasi-periodic crack structures and other complex crack patterns, as studied in~\cite{LeonBaldelli2012,Mesgarnejad,LeonBaldelli2014} in the context of variational approach to fracture mechanics.

Winkler foundation models are regarded as heuristic, phenomenological models, and their consistency on the physical ground is often questioned in favor of more involved multi-parameter foundation models such as Pasternak~\cite{Pasternak1955}, Filonenko-Borodich~\cite{Filonenko-Borodich1940}, to name a few.
The choice of such model is usually entrusted to mechanical intuition, and  the calibration of the ``equivalent stiffness'' constant $K$ is usually performed with empirical tabulated data, or finite element computations.

Despite their wide application, to the best knowledge of the authors and up to now, no attempts have been made to fully justify and derive linear elastic foundation models from a general, three-dimensional elastic model without resorting to any a priori kinematic assumption. 

The purpose of this work is to give insight into the nature and validity of such reduced-dimension models, via a mathematically rigorous asymptotic analysis, providing a novel justification of Winkler foundation models.

As a product of the deductive analysis, we also obtain the dependence of the ``equivalent stiffness'' of the foundation, $K$ in Equation~\eqref{eqn:winkler_const_eq}, on the material and geometric parameters of the system.

\bigskip

In thin film systems, the separation of scales between in-plane and out-of-plane dimensions introduces a ``small parameter'', henceforth denoted by $\e$, that renders the variational elasticity problem an instance of a ``singular perturbation problem'' which can be tackled with techniques of rigorous asymptotic analysis, as studied in an abstract setting in~\cite{Lions1973}. 
Such asymptotic approaches have also permitted the rigorous justification of linear and nonlinear, reduced dimension, theories of homogeneous and heterogeneous~\cite{Geymonat1987a,Marigo2005a} rods as well as linear and nonlinear plates~\cite{Ciarlet1979} and shells~\cite{Ciarlet1996c}.

Engineering intuition suggests that there may be multiple scenario leading to such reduced model.
Our interest in providing a rigorous derivation span from previous works on system of thin films bonded to a rigid substrate, hence we focus on the general situation of inearly elastic bi-layer system, constituted by a \emph{film} bonded to a rigid substrate by the means of a \emph{bonding layer}.
We take into account possible abrupt variations of the elastic (stiffness) and geometric parameters (thicknesses) of the two layers by prescribing an arbitrary and general scaling law for the stiffness and thickness ratios, depending on the geometric small parameter $\e$.

The work is organized as follows.
In Section~\ref{sec:statement_of_the_problem_and_main_results}, we introduce the asymptotic, three-dimensional, elastic problem $\PeOe$ of a bi-layer system attached to a rigid substrate, in the framework of geometrically linear elasticity. 
We further state how the \emph{data}, namely the intensity of the loads, the geometric and material parameters are related to $\e$.
In order to investigate the influence of material and geometric parameters rather than the effect  of the order of magnitude of the imposed loads on the limitig model, as \eg in the spirit of \cite{Marigo2005a},
we prescribe a fixed scaling law for the load and a general scaling law for the material and geometric quantities (thicknesses and stiffnesses), both depending upon a small parameter $\e$. 
The latter identifies an $\e$-indexed family of energies $\tilde E_\e$ whose associated minimization problems we shall study in the limit as $\e\to 0$.
We then perform the classical anisotropic rescaling of the space variables, in order to obtain a new problem $\PeO$, equivalent to $\PeOe$, but posed on a fixed domain $\O$ and whose dependence upon $\e$ is explicit.
We finally synthetically illustrate on a phase diagram identified by the two non-dimensional parameters of the problem, the various asymptotic regimes reached in the limit as $\e\to0$.

In Section~\ref{sec:proofs_of_the_theorems} we establish the main results of the paper by performing the parametric asymptotic analysis of the elasticity problems of the three-dimensional bi-layer systems. 
We start by establishing a crucial lemma, namely Lemma~\ref{lem:conv_scaled_strains}, which gives the convergence properties of the families of \emph{scaled strains}. 
We finally move to the proof of the results collected into Theorem~\ref{thm:in_of_plane_elast_found} and~\ref{thm:out_of_plane_elast_found}.
The analysis of each regime is concluded by a dimensional analysis aimed to outline the distinctive feature of such reduced models, namely the existence of a characteristic \emph{elastic} length scale in the limit equations.

\makeatletter{}\section{Statement of the problem and main results} \label{sec:statement_of_the_problem_and_main_results}
\subsection{Notation} \label{sub:notation}
We denote by $\O$ the reference configuration of a three-dimensional linearly elastic body and by $u$ its displacement field.
We use the usual notation for function spaces, denoting by $L^2(\O;\R^n), H^1(\O;\R^n)$  respectively the Lebesgue space of square integrable functions on $\O$ with values in $\R^n$, the Sobolev space of square integrable functions with values in $\R^n$ with square integrable weak derivatives on $\O$.
We shall denote by $H^1_0(\O;\R^n)$ the vector space associated to $H^1(\O, \R^n)$, and use the concise notation $L^2(\O), H^1(\O), H^1_0(\O)$ whenever $n=1$.
The norm of a function $u$ in the normed space $X$ is denoted by $\nl[u]{X}$, whenever $X=L^2(\O)$ we shall use the concise notation $\nl[u]{\O}$. Lastly, we denote by $\dot H^1(\O)$ the quotient space between $H^1(\O)$ and the space of infinitesimal rigid displacements $\mathcal R(\O) = \{v\in H^1(\O), e_{ij}(v)=0\}$, equipped by its norm 
$\nl[u]{\dot H^1(\O)} := \inf_{r\in \mathcal R(\O)}\nl[u-r]{H^1(\O)}$.
Weak and strong convergences are denoted by $\wto$ and $\to$, respectively.

We shall denote by $\CadKL(\O)=\left\{v \in \H^1(\O; \R^3), \eit(v)=0 \inn \O \right\}$ the space of sufficiently smooth shear-free displacements in $\O$, and by
$\CadrelKL(\O):=\left\{ \dot\H^1(\O)\cap \mathcal R(\O)^\perp \times H^1(\O), \mathrm{e}_{i3}(v)=0 \inn \OF \right\}$ the admissible space of sufficiently smooth displacements whose in-plane components are orthogonal to infinitesimal rigid displacements, whose transverse component satisfies the homogeneous Dirichlet boundary condition on the interface $\om$, and which are shear-free in the film. 
Classically, $\e\ll 1$ is a small parameter (which we shall let to $0$), and the dependence of functions, domains and operators upon $\e$ is expressed by a superscripted $\e$. 
Consequently, $x^\e$ is a material point belonging to the $\e$-indexed family of domains $\Oe$.
We denote by $e^\e(v)$ the linearized gradient of deformation tensor of the displacement field $v$, defined as $e^\e(v)=1/2(\nabla^\e v + (\nabla^\e)^T v)=1/2 \left( \frac{\partial v_i}{\partial x^\e_j} + \frac{\partial v_j}{\partial x^\e_i} \right)  $ 
In all that follows, subscripts $b$ and $f$ refer to quantities relative to the bonding layer and film, respectively.
The inner (scalar) product between tensors is denoted by a column sign, their components are indicated by subscripted roman and greek letters spanning the sets $\{1, 2, 3\}$ and $\{1, 2\}$, respectively.

\bigskip

We consider as model system consisting of two superposed linearly elastic, isotropic, piecewise homogeneous layers bonded to a rigid substrate, as sketched in Figure~\ref{fig:fig_3d}.
Let $\o$ be a bounded domain in $\R^2$ of characteristic diameter $ L=\operatorname{diam}(\o)$.
 A  \emph{thin} film occupies the region of space $\clo{\Oef} = \clo{\o} \times [0, \e\hf]$ with $\e\ll 1$, and the bonding layer occupies the set $\clo{\Oeb} = \clo{\o}\times[-\e^{\a+1}\hb, 0]$ for some constant $\a\in \R$. 
The latter is attached to a rigid substrate which imposes a Dirichlet (clamping) boundary condition of place 
 at the interface $\oem:=\o\times\{-\e^{\a+1}\hb\}$, with datum $w\in L^2(\o)$. 
 We denote the entire domain by $\Oe:=\Oef\cup\Oeb$.

\begin{figure}[htbp]
 \centering
    \includegraphics[width=9cm]{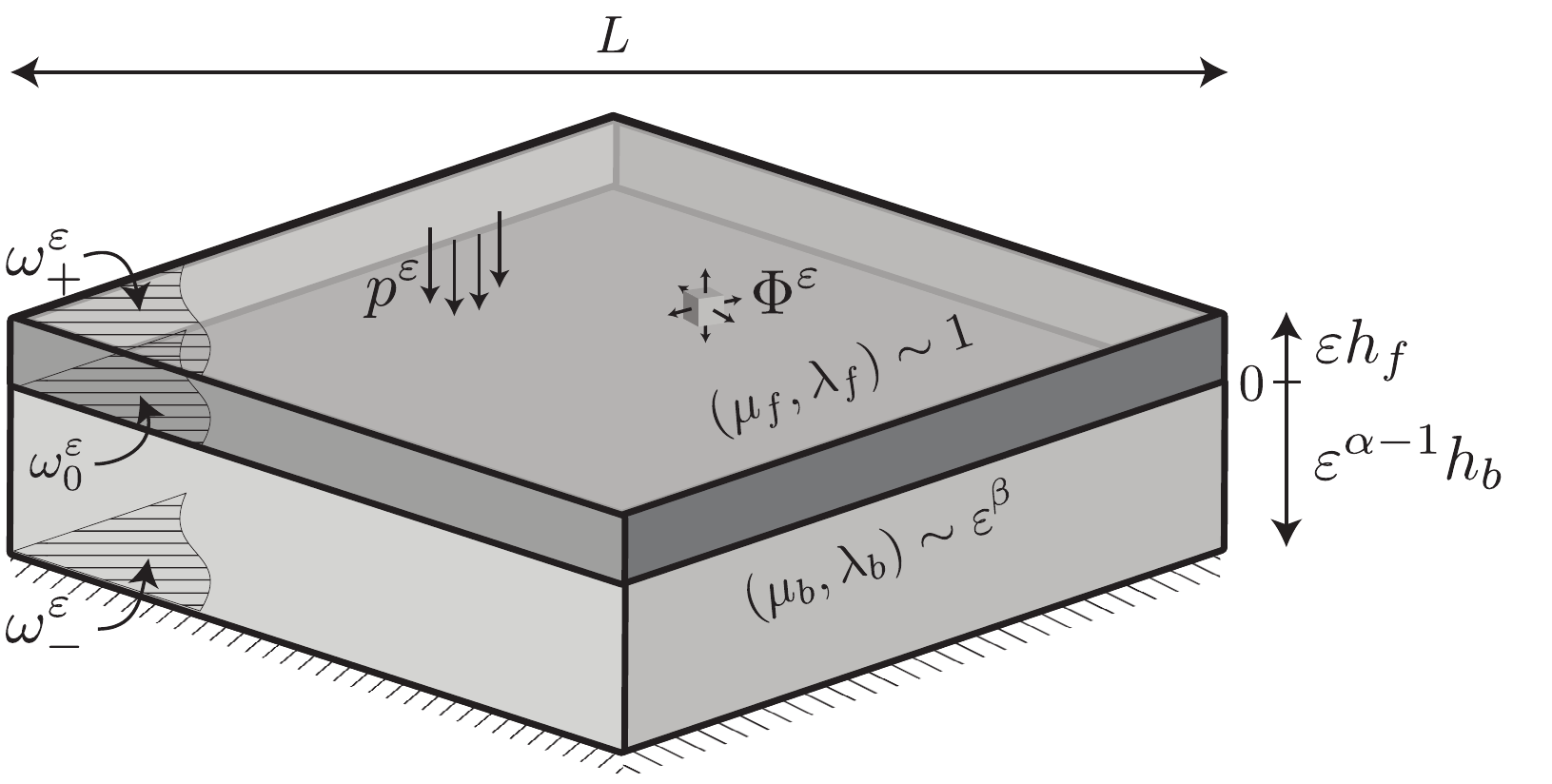}
 \caption{The three dimensional model system.}
 \label{fig:fig_3d}
\end{figure}
Considering the substrate infinitely stiff with respect to the overlying film system, the boundary datum $w$ is interpreted as the displacement that the underlying substrate would undergo under structural loads, neglecting the presence of the overlying film system.
In addition to the hard load $w$, we consider two additional loading modes: an imposed inelastic strain $\tstrload^\e\in L^2(\Oe; \R^{3\times3})$ and a transverse force $\pe\in L^2(\op)$ acting on the upper surface.
The inelastic strain can physically be originated by, e.g., temperature change, humidity or other multiphysical couplings, and is typically the source of in-plane deformations.
On the other hand, transverse surface forces may induce bending.
Taking into account both in-plane and out-of-plane deformation modes, we model both loads as independent parameters regardless of their physical origin. 
Finally, the lateral boundary $\partial \o \times (-\e^{\a+1}\hb, \hf)$
  is left free.

The Hooke law for a linear elastic material writes $\sigma^\e=\elast^\e(x) \str=\lambda^\e(x)\tr(\str){I}_3+2\mu^\e(x) \str$. 
Here, $\str$ stands for the linearized elastic strain and $\elast^\e(x)$ is the fourth order stiffness tensor.
Classically, the potential elastic energy density $W(\str^\e(v); x)$ associated to an admissible displacement field $v$, is a quadratic function of the elastic strain tensor $\str^\e(v)$ and reads:
\[
	W^\e(\xi; x)=\elast^\e(x) \xi: \xi = \lambda^\e(x) \tr(\xi)^2 + 2 \mu^\e(x) \xi:\xi,
\]
where the linearized elastic strain tensor $\str^\e(v, x)=e^\e(v)-\tilde \strload^\e(x)$ accounts for the presence of imposed inelastic strains $\tilde \strload^\e(x)$. 
Denoting by $\mathcal L^\e(u)=\int_{\oep} \pe v_3 ds$ the work of the surface force, the total potential energy $\tilde E(v)$ of the bi-layer system subject to inelastic strains and transverse surface loads reads:
\begin{equation}
\label{eqn:total_en_3d}
	\tilde E_\e(v):=\frac{1}{2}\int_{\Oe} W^\e( \str^\e (v, x), x)-\mathcal L^\e(v)
					\end{equation}
and is defined on kinematically admissible displacements belonging to the set $\Cadw^\e$ of sufficiently smooth, vector-valued fields $v$ defined on $\Oe$ and satisfying the condition of place $v=\dispload$ on $\oem$, namely:
\[
	\Cadw^\e(\O):=\left\{ v_i \in \H^1(\Oe), v_i=w \onn \oem \right\}.
\]	

Up to a change of variable, we can bring the imposed boundary displacement into the bulk;
in addition, without restricting the generality of our arguments and in order to keep the analysis as simple as possible, we further consider inelastic strains of the form:
\[
	\tilde \strload^\e(x) = 
	\begin{cases}
		\strload^\e(x), &\text{ if }x \in \Of\\
		0, &\text{ if }x \in \Ob\\
	\end{cases},
\]

For the definiteness of the elastic energy \eqref{eqn:total_en_3d}, we have to specify how the data, namely (the order of magnitude of) the material coefficients in $\elast^\e(x)$ as well as the intensity of the loads $\strload^\e $ and $ \pe$, depend on $\e$.
As far as the dimension-reduction result is concerned, multiple choices are viable, possibly leading to different limit models.
Our goal is to highlight the key elastic coupling mechanisms arising in elastic multilayer structures, with particular focus on the influence of the material and geometric parameters on the limit behavior, as opposed to analyze the different asymptotic models arising as the load intensity (ratio) changes, as done \eg in \cite{Marigo2005a,Friesecke2006}. 
We shall hence account for a wide range of relative thickness ratios and for possible strong mismatch in the elasticity coefficients, considering the simplest scaling laws that allow us to explore the elastic couplings yielding linear elastic foundations as an asymptotic result.
Hence, we perform a parametric study, letting material and geometric parameters \emph{vary}, for a \emph{fixed} a scaling law for the intensities of the external loads.
More specifically, we assume the following hypotheses.
\begin{hypothesis}[Scaling of the external load]
\label{hyp:geom_elast_scaling_law}
Given functions $p\in L^2(\o), \strload\in L^2(\O; \Mtwo)$, we assume that the magnitude of the external loads scale as:
  \begin{equation}
\label{eqn:scaling_law_loads} 
\pe(x)=\e^2 p(x), \qquad \Ae(x) = \e \strload(x)
 \end{equation}
with $\strload\in L^2(\Of)$.
\end{hypothesis}

\begin{remark}
Owing to the linearity of the problem, up to a suitable rescaling of the unknown displacement and of the energy, the elasticity problem is identical under a more general scaling law for the loads of the type: $\pe = \e^{t+1} p, \Ae=\e^{t}$ for $t\in \R$.
Indeed, only the relative order of magnitude of the elastic load potentials associated to the two loading modes is relevant. Hence, without any further loss of generality, we take $t=1$.
\end{remark}

\begin{hypothesis}[Scaling of material properties]
\label{hyp:elast_scaling_law}
Given a constant $\beta\in \R$, we assume that 
 the elastic moduli of the layers scale as:
   \begin{equation}
     \frac{E_b^\e}{E^\e_f}= \rho_E \e^\b, \quad\frac{\nu_b^\e}{\nu_f^\e}= \rho_\nu, 
 \label{eqn:scaling_law_elast}
   \end{equation}
  where $\rho_E$ and $\rho_\nu$ 
   are non-dimensional coefficients independent of $\e$.
 \end{hypothesis}

\begin{remark}
Note that this is equivalent to say that both film to bonding layer ratios of the Lamé parameters scale as $\e^\b$ and no strong elastic anisotropy is present so that the scaling law \eqref{eqn:scaling_law_elast} is of the form:
\[
	 \dfrac{\mu_b}{\mu_f}= \rho_\mu \e^\b, \quad\dfrac{\lambda_b^\e}{\lambda_f^\e}= \rho_\lambda \e^\b, 
\]
where $\rho_\mu, \rho_\lambda\in \R$ are independent of $\e$.
Consequently, the bonding layer is stiffer than the film (resp. more compliant) for $\b>0$ (resp. $\b<0$); the bonding layer is as stiff as film if $\b=0$.

\end{remark}

The study of equilibrium configurations corresponding to admissible global minimizers of the energy leads us to minimize $E(u)$ over the vector space of kinematically admissible displacements $\Cadz(\O)$.

Plugging the scalings above, the problem $\PeOe$ of finding the equilibrium configuration of the multilayer system depends implicitly on $\e$ via the assumed scaling laws, is defined on families of $\e$-dependent domains $(\Oe)_{\e>0}=(\Oef\cup\Oeb)_{\e>0}$, and reads:
\begin{equation}
\label{prb:min_3d_epsilon}
	 \PeOe: \quad \text{Find } \ue\in \Cadz(\Oe) \text{ minimizing } \tilde E_\e(u) \text{ among } v\in \Cadz(\Oe),
\end{equation}

Because the family of domains $(\Oe)_{\e >0}$ vary with $\e$ in $\PeOe$, we perform the classical anisotropic rescaling in order to state a new problem $ \PeO$, equivalent to $\PeOe$, in which the dependence upon $\e$ is explicit and is stated on a fixed domain $\O$.
Denoting by $x'=(x_1, x_2)\in \o$ and by $\tilde x'=(\tilde x_1,\tilde  x_2)$, the following anisotropic scalings:
\begin{equation}
\label{eqn:anis-scal-coord}
\pi^\e(x): \left\{
	\begin{aligned}
	&x=(x', x_3)\in \clo \Of \mapsto (\tilde x', \e\tilde  x_3) \in \clo\Oef, \\
		&x=(x',  x_3)\in \clo \Ob\mapsto (\tilde x', \e^{\a+1}\tilde  x_3) \in \clo\Oeb , 
\end{aligned}\right.
\end{equation}
map the domains $\Oef$ and $\Oeb$ into $\Of=\o\times [0, \hf)$ and $ \Ob=\o\times (-\hb, 0)$.
As a consequence of the domain mapping, the components of the linearized strain tensor $e_{ij}(v)=e^\e_{ij}(v\circ \pi(x))$ scale as follows:
\begin{gather}
	\eab^\e(v) \mapsto \eab(v), 
	 	\quad \ett^\e(v) \mapsto \frac{1}{\e}\ett(v), \quad \eat^\e(v)\mapsto \frac{1}{2}\left(  \frac{1}{\e} \pt \va + \pa \vt\right) \qquad  \text{in }\Oef,
	\\
	\eab^\e(v) \mapsto  \eab(v), 
	 	\quad \ett^\e(v) \mapsto \frac{1}{\e^{\a-1} } \ett(v), \quad \eat^\e(v)\mapsto \frac{1}{2}\left(  \frac{1}{\e^{\a-1}} \pt \va + \pa \vt\right) \qquad  \text{in }\Oeb. 
\end{gather}
Finally, the space of kinematically admissible displacements reads \[
	\Cadz(\O):=\left\{ v_i\in \H^1(\O),\; v_i=0\; \aee \onn  \o\times\{- \hb\} \right\}.
\]
It is easy to verify that the asymptotic minimization problem $\min_{u\in \Cadz(\O)} \hat E_\e(u) $ where  $\hat E_\e(u)=\frac{1}{\e}{\tilde E(u\circ \pi^\e(x))}$
yields the trivial convergence result $\ua = \lim_{\e\to0} \uea= 0$. 
This is to say that the in-plane components of the (weak limit) displacement are smaller than order zero in $\e$.
After having established this result, the analysis should be restarted anew to determine the convergence properties of the higher order terms.
Here, we skip that preliminary step and directly investigate the asymptotic behavior of the next order in-plane displacements, that is to say of fields $\tilde u^\e$ that admit the following scaling:
\begin{equation}
\label{eqn:resc_displ}
	\tilde u^\e = (\e \uea, \uet)\in \Cadz(\O).
\end{equation}
\begin{remark}
This result strongly depends upon the assumed scaling of external loads. Clearly, different choices rather than \eqref{eqn:scaling_law_loads} may lead to different scalings of the principal order of displacements, and possibly different limit models.
\end{remark}

Finally, dropping the tilde for the sake of simplicity, the parametric, asymptotic elasticity problem, stated on the fixed domain $\O$, using the scaling~\eqref{eqn:resc_displ} and in the regime of Hypothesis~\ref{hyp:elast_scaling_law}, reads:
\begin{equation}
	\PeO: \quad \text{Find } \ue\in \Cadz(\O) \text{ minimizing } E_\e(v) \text{ among } v \in \Cadz(\O),
\end{equation}
where, upon introducing the non-dimensional parameters
\begin{equation}
\label{eqn:nondim_par}
		\g:=\frac{\a+\b}{2}, \quad \d:=\frac{\b-\a}{2}-1,  \qquad\gamma, \delta\in \R,
\end{equation}
the scaled energy $E_\e(u)=\frac{1}{\e^3}{\tilde E(u\circ \pi^\e(x))}$ takes the following form:
\begin{multline}
\label{eqn:scaled_en}
	E_ \e(u)=\frac{1}{2}
	\int_\Of \left\{ \laf \left|\frac{\ett(u)}{\e^2}+\eaa(u)\right|^2
	+ \frac{2\muf }{\e^2} \left| \pt \ua + \pa \ut\right|^2 
	+ 2\muf \left( \left|\eab(u)\right|^2 +\left|\frac{\ett(u)}{\e^2}\right|^2  \right)  \right\} dx  \\
	+\frac{1}{2}\int_\Ob \left\{ \lab \left|\e^{\delta-1} \ett(u) +\e^{\gamma}\eaa(u)\right|^2  +
	 2\mub \left| \e^{\delta}\pt \ua + \e^{\gamma-1}\pa \ut\right|^2 
	+ 2\mub \left( \left|\e^{\gamma}\eab(u)\right|^2 +\left|\e^{\delta-1} \ett(u) \right|^2 \right)  \right\}dx\\
	-\int_\Of \left( 2 \muf \Att + \laf \Aaa \right) \frac{\ett(u)}{\e^2} dx
	-\int_\Of  \left\{ \laf \left(\Aaa + \Att \right) \ebb(u) + 2 \muf \Aab \eab(u) \right\}dx
	-\int_{\op} p u_3 dx' +F.
\end{multline}
In the last expression $F:=\frac{1}{2}\int_\Of (\elast_f)_{ijhk} \strload_{ij}:\strload_{hk}dx$ is the residual (constant) energy due to inelastic strains.
The non-dimensional parameters $\g$ and $\d$ represent the order of magnitude of the ratio between the membrane strain energy of the bonding layer and that of the film ($\g$), and the order of magnitude of the ratio between the transverse strain energy of the bonding layer and the membrane energy of the film ($\delta$).
They define a phase space, which we represent in Figure~\ref{fig:phase_diag}.

\begin{figure}[htbp]
 \centering
    \includegraphics[width=10cm]{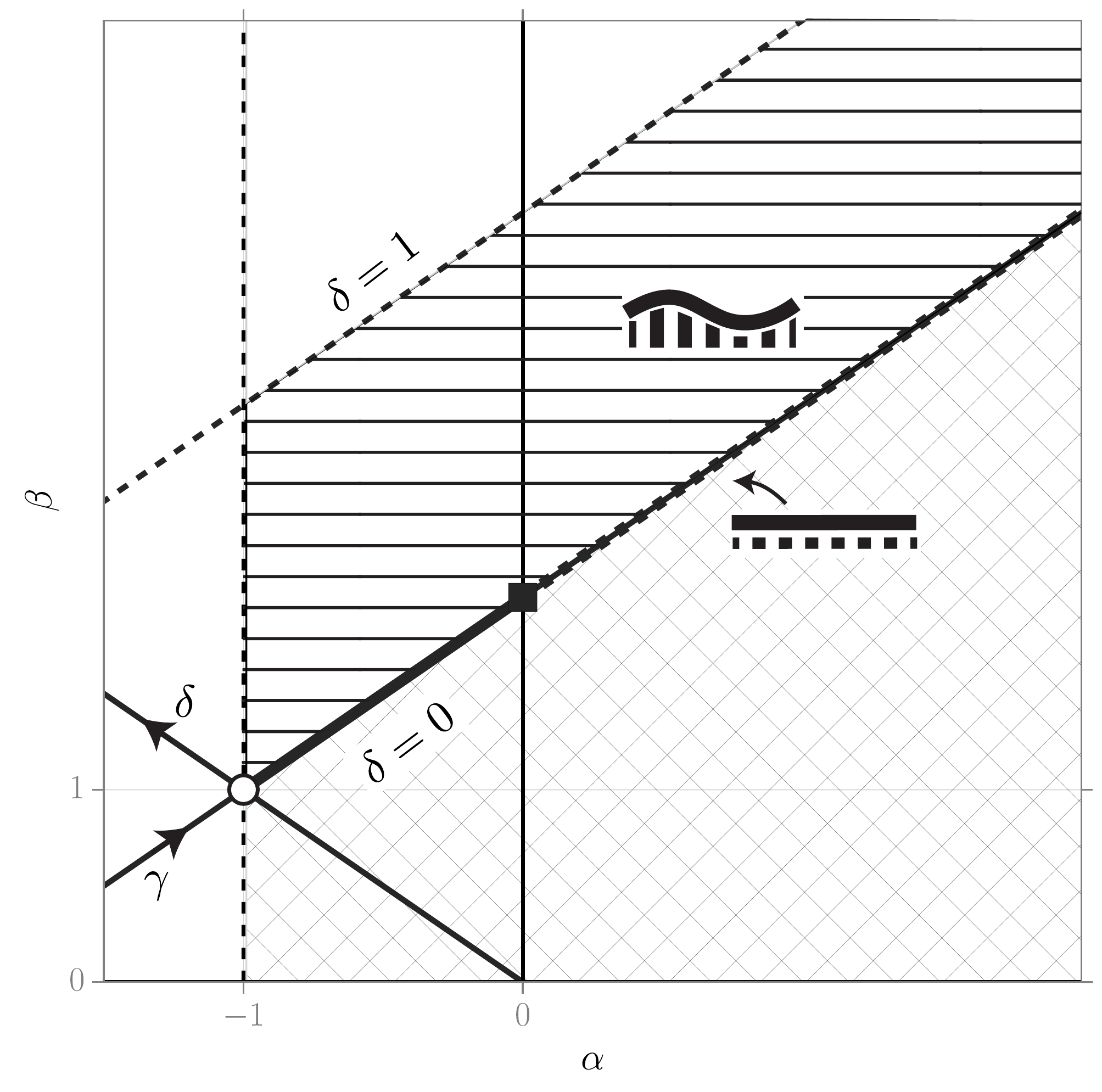}
 \caption{Phase diagram in the space ($\alpha$--$\beta$), where $\a$ and $\b$ define the scaling law of the relative thickness and stiffness of the layers, respectively. 
 Three-dimensional systems within the unshaded open region $\alpha<-1$ become more and more \emph{slender} as $\e\to0$.
  The square-hatched region represents systems behaving as ``rigid'' bodies, under the assumed scaling hypotheses on the loads.
 Along the open half line (displayed with a thick solid and dashed stroke) $(\delta, 0), \delta>0$ lay systems whose limit for vanishing thickness leads to a ``membrane over in-plane elastic foundation'' model, see Theorem~\ref{thm:in_of_plane_elast_found}.
 In particular, the solid segment $0<\gamma<1$ (resp. dashed open line $\gamma>1$) is related to systems in which bonding layer is thinner (resp. thicker) than the film, for $\g=1$ (black square) their thickness is of the same order of magnitude.
All systems within the red region $\gamma>0, 0<\delta\leq 1, \delta>\gamma$ behave, in the vanishing thickness limit, as ``plates over out-of-plane elastic foundation'', see Theorem~\ref{thm:in_of_plane_elast_found}.}
 \label{fig:phase_diag}
\end{figure}

The open plane $\g-\d<0$ corresponds to three-dimensional systems that become more and more slender as $\e\to0$.
Their asymptotic study conducts to establishing reduced, one-dimensional (beam-like) theories and falls outside of the scope of the present study.
The locus $\g-\d=0$ identifies the systems that stay three dimensional, as $\e\to0$, because the thickness of the bonding layer is always of order one (recall that $\clo{\Oeb} = \clo{\o}\times[-\e^{\a+1}\hb, 0]$ becomes independent of $\e$ for $\g-\d=0$).
In order to explore reduced, two-dimensional theories, we focus on the open half plane identified by:
\begin{equation}
\label{eqn:2d_regime_gammadelta_ineq}
 	\g-\d >0.
 \end{equation}

In what follows, we give a brief and non-technical account and mechanical interpretation of the dimension reduction results collected in Theorems~\ref{thm:in_of_plane_elast_found} and \ref{thm:out_of_plane_elast_found}.

For a given value of $\g$ and increasing values of $\d$ we explore systems in which  the order of magnitude of the energy associated to transverse variations of displacements in the bonding layer progressively increases relatively to the membrane energy of the film.
We hence encounter three distinct regions characterized by qualitatively different elastic couplings. Their boundaries are determined by the value of $\d$, as is $\d$ that determines the convergence properties of scaled displacements \eqref{eqn:resc_displ} at first order.
This argument will be made rigorous in Lemma~\ref{lem:poincare}.

For $\d<0$  the system is ``too stiff'' (relatively  to the selected intensity of loads) and both in-plane and transverse components of displacement vanish in the limit; that is, their order of magnitude is smaller than order zero in $\e$.

For $\d=0$, the shear energy of the bonding layer is of the same order of magnitude as the membrane energy of the film.
Consequently, elastic coupling intervenes between these two terms resulting in that the first order in-plane components of the limit displacements are of order zero.
Moreover, the transverse stretch energy of the bonding layer is singular and its membrane energy is infinitesimal: the first vanishes and the latter is negligible as $\e\to0$; the bonding layer undergoes purely shear deformations. 
More specifically, the condition of continuity of displacement at the interface $\op$ and the boundary condition on $\om$, both fix the intensity of the shear in the bonding layer.
As a consequence, the transverse profile of equilibrium (optimal) displacements is linear and the shear energy term in the bonding layer contributes to the asymptotic limit energy as a ``linear, in-plane, elastic foundation''.
On the other hand, because transverse stretch is asymptotically vanishing, out-of-plane displacements are constant along the thickness of the multilayer and are determined by the boundary condition on $\om$.
Hence, although Kirchhoff-Love coupling --\ie shear-free-- between components of displacements is allowed in the film, bending effects do not emerge in the first order limit model.
 More precisely, we are able to prove the following theorem:
\begin{theorem}[Membrane over in-plane elastic foundation]
\label{thm:in_of_plane_elast_found}
Assume that Hypotheses~\ref{hyp:geom_elast_scaling_law} and~\ref{hyp:elast_scaling_law} hold and let $\ue$ be the solution of Problem $\PeO$ for $\d=0$, then 
\begin{enumerate}[i)]
\item there exists a function $u \in \H^1(\OF; \R^3)$ such that $\ue \to u$  strongly in $\H^1(\OF; \R^3)$;
\item $u_3 \equiv 0$ and $\partial_3\ua \equiv 0$ in $\O$, so that $u$ can be identified with a function in $\H^1(\o,\R^2)$, which we still denote by $u$, and such that for all $v_\alpha \in  \H^1(\o,\R^2)$:
\begin{multline}
	 \io \left\{ \frac{2\laf \muf \hf}{\laf+2 \muf}\eaa(u) \ebb(v) 
    + 2\muf\hf \eab(u)\eab(v)
	+ \frac{2\mub}{\hb} \ua \va  \right\}\, dx'\\	
	= \io \left\{  \left(  c_1\bAaa +c_2\bAtt  \right) \ebb(v) +  c_3\bAab \eab(v) \right\}\, dx',
\end{multline}
where $\bar \strload_{ij} = \int_0^{\hf} \strload_{ij} dx_3$ are the averaged components of the inelastic strain over the film thickness, and coefficients $c_i$ are determined explicitly as functions of the material  parameters:
\begin{equation}
	\label{eqn:coeffs_in_plane}
c_1=
\frac{2 \laf  \muf }{\laf +2 \muf },
\quad c_2=
\frac{\laf ^2}{\laf +2 \muf },
\text{ and }\quad c_3=2 \muf. \end{equation}
\end{enumerate}

\end{theorem}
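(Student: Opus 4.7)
The plan is to combine a priori energy bounds with weak compactness, identify the kinematic structure of the weak limit, pass to the limit in the Euler--Lagrange equation of $E_\e$, and finally upgrade weak to strong convergence via an energy identity. The key structural reductions are a plane-stress elimination of the transverse stretch in the film and a Jensen-type estimate yielding the Winkler coefficient in the bonding layer.

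\textbf{Step 1 (a priori bounds and weak compactness).} From minimality of $\ue$, testing against the lift of the clamping datum gives $E_\e(\ue)\le C$ uniform in $\e$. Inspecting~\eqref{eqn:scaled_en} and absorbing the linear $\strload$- and $p$-terms by Cauchy--Schwarz yields separate $L^2$-bounds on each scaled strain carrying a singular prefactor: for $\d=0$, $\ett(\ue)/\e^2$ and $(\pt\uea+\pa\uet)/\e$ are bounded in $L^2(\Of)$, while $\ett(\ue)/\e$ and $\pt\uea+\e^{\g-1}\pa\uet$ are bounded in $L^2(\Ob)$. Together with Lemma~\ref{lem:conv_scaled_strains} and the Poincar\'e-type estimate (Lemma~\ref{lem:poincare}), this produces, along a subsequence, $\ue\wto u$ weakly in $H^1(\O;\R^3)$.

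\textbf{Step 2 (kinematics of the limit).} Passing to the weak limit in the bounds above forces $\ett(u)=0$ a.e.~in $\O$ and $\pt\ua+\pa u_3=0$ a.e.~in $\Of$. The Dirichlet condition $u_3=0$ on $\om$ combined with $\ett(u)=0$ propagates $u_3\equiv 0$ through $\Ob$, and trace continuity at the interface together with $\ett(u)=0$ in $\Of$ gives $u_3\equiv 0$ in $\Of$; the shear-free relation then reduces to $\pt\ua=0$, so $\ua$ in $\Of$ identifies with a function in $H^1(\o;\R^2)$, establishing~(ii).

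\textbf{Step 3 (pass to the limit in the variational equation).} Writing the Euler--Lagrange equation of $E_\e$ and testing against $v=(0,0,\phi)$ with $\phi\in C_c^\infty(\Of)$ identifies the weak $L^2(\Of)$-limit $\xi$ of $\ett(\ue)/\e^2$ as the plane-stress value
\begin{equation*}
\xi \;=\; \frac{-\laf\,\eaa(u)+\laf\,\strload_{\alpha\alpha}+2\muf\,\strload_{33}}{\laf+2\muf}.
\end{equation*}
For the in-plane equation, I would test against $v=(v_\a(x'),0)$ in $\Of$, extended affinely by $v_\a(x',x_3)=v_\a(x')(x_3+\hb)/\hb$ in $\Ob$; such $v\in\Cadz(\O)$ has $\ett(v)=0$ and $\eat(v)=0$ in $\Of$, and its in-plane contributions in $\Ob$ carry a vanishing prefactor $\e^\g$. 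Substituting $\xi$ into the film bilinear form and integrating over $[0,\hf]$ yields the plane-stress moduli together with the load coefficients $c_1,c_2,c_3$ of~\eqref{eqn:coeffs_in_plane}. In the bonding layer the shear term reduces to $\int_{\Ob} 2\mub(\pt\uea+\e^{\g-1}\pa\uet)(v_\a/\hb)\,dx$; the Jensen estimate
\begin{equation*}
\int_{-\hb}^0 |\pt\uea|^2\,dx_3 \;\geq\; \tfrac{1}{\hb}\bigl|\uea(x',0)-\uea(x',-\hb)\bigr|^2,
\end{equation*}
combined with the boundary condition at $x_3=-\hb$ and the interface trace of $\ua$, forces $\pt\uea\to\ua(x')/\hb$ in $L^2(\Ob)$ while the residual $\e^{\g-1}\pa\uet$ vanishes, contributing the Winkler term $\int_\o(2\mub/\hb)\,\ua v_\a\,dx'$ in the limit.

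\textbf{Step 4 (strong convergence) and main obstacle.} The liminf inequalities of Steps~1--3 are saturated by testing $E_\e$ against a recovery sequence built from $u$ via the affine extension in $\Ob$ plus a corrector enforcing the plane-stress value of $\ett(\ue)/\e^2$; the convergence $E_\e(\ue)\to E_0(u)$ then upgrades weak to strong $H^1(\Of;\R^3)$-convergence, proving~(i). The main obstacle I anticipate is in Step~3 on the bonding layer: when $\g\in(0,1)$ the prefactor $\e^{\g-1}$ diverges, so the two addends in $\pt\uea+\e^{\g-1}\pa\uet$ cannot be controlled independently. Disentangling them requires the full strength of Lemma~\ref{lem:conv_scaled_strains} together with the minimality of $\ue$, which selects the optimal affine profile through the bonding-layer thickness and justifies the strong $L^2(\Ob)$-convergence of $\pt\uea$ to $\ua/\hb$.
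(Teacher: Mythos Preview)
Your overall architecture—energy bounds, weak compactness, plane-stress elimination of the transverse stretch, affine test functions, and an energy argument for strong convergence—matches the paper's proof closely. Two points, however, diverge in a way that matters.

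\textbf{The Jensen argument in Step~3 is misplaced.} A Jensen inequality yields a \emph{lower bound} on the bonding-layer shear energy; it does not, by itself, give you convergence of $\partial_3 u_\alpha^\e$ in $L^2(\Ob)$, let alone strong convergence to $u_\alpha/h_b$. More importantly, no such convergence is needed to pass to the limit in the Euler--Lagrange equation. The paper's route is cleaner: for test fields with $\ett(v)=0$ in $\Ob$ and $v=0$ on $\om$, one has $v_3\equiv 0$ in $\Ob$, so $\hat\kappa_{3\alpha}^\e(v)=\tfrac12\partial_3 v_\alpha$ is \emph{independent of $\e$}. Since $\hat\kappa_{3\alpha}^\e(u^\e)$ is bounded in $L^2(\Ob)$ (this is the full strength of Lemma~\ref{lem:conv_scaled_strains}, not just its stated $33$-component), the product passes to the limit by weak convergence. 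Choosing the affine profile $v_\alpha=(x_3+h_b)v_\alpha(x')/h_b$ then reduces the pairing to $\int_\o \tfrac{v_\alpha}{h_b}\int_{-h_b}^0 \partial_3 u_\alpha\,dx_3\,dx'=\int_\o \tfrac{v_\alpha}{h_b}\,u_\alpha(x',0)\,dx'$ by the fundamental theorem of calculus. The ``residual'' $\e^{\gamma-1}\partial_\alpha u_3^\e$ you worry about never has to be disentangled from $\partial_3 u_\alpha^\e$: it is absorbed into the single bounded sequence $\hat\kappa_{3\alpha}^\e(u^\e)$, and its contribution against a test field with $v_3=0$ is simply not present. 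Your anticipated obstacle therefore dissolves once you work with the rescaled strain as a whole rather than its two addends.

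\textbf{Strong convergence is obtained by convexity, not by a recovery sequence.} The paper does not build a recovery sequence. It writes
\[
2\mu_f\nl[e_{\alpha\beta}(u^\e)-e_{\alpha\beta}(u)]{\Of}^2\le \iof \elast_f(\kappa^\e-k){:}(\kappa^\e-k)\,dx+\iob \elast_b(\hat\kappa^\e-\hat k){:}(\hat\kappa^\e-\hat k)\,dx,
\]
expands the square, uses the Euler--Lagrange equation at level $\e$ to replace the quadratic term by the load functional, and passes to the limit using the already-established weak convergences and the limit variational identity. This is shorter and avoids constructing correctors.

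A minor technical slip: you assert $u^\e\rightharpoonup u$ in $H^1(\O;\R^3)$, but in $\Ob$ only $\e^\gamma e_{\alpha\beta}(u^\e)$ is bounded, so the in-plane gradients blow up and full $H^1(\O)$ compactness fails. The paper works with $H^1(\Of)$ for the full vector and anisotropic (vertical) control in $\Ob$.
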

The last equation is interpreted as the variational formulation of the equilibrium problem of a \textbf{linear elastic membrane over a linear, in-plane, elastic foundation.}

In order to highlight the inherent \emph{size effect} emerging in the limit energy it suffices to normalize the domain $\o$ by rescaling the in-plane coordinates by a factor $L=\operatorname{diam}(\o)$. Hence, introducing the new spatial variable $y:=x'/L$ the equilibrium equations read:
\begin{multline}
\int_{\bar \o} \left\{  \eab(u)\eab(v) + 
\frac{\laf}{\laf+2 \muf}\eaa(u) \ebb(v) 
	+ \frac{ L^2 }{\ell_e^2 }\ua   \va  \right\}dy'\\
			= \int_{\bar \o} \left\{ \left(  \hat c_1 \bAaa +\hat c_2\bAtt  \right) \ebb(v) +   \hat c_3\bAab \eab(v) \right\}dy',
	\quad \forall \va \in \H^1(\o).
\end{multline}
where the internal elastic length scale of the membrane over in-plane foundation system is:
\begin{equation}
	\label{eqn:int_len_ip}
	\ell_e=\sqrt{\frac{\muf}{\mub}{\hf\hb}},
\end{equation}
and $\hat c_i = \frac{c_i}{2\muf \hf}$ and $\bar\o = \o/\operatorname{diam}(\o)$ is of unit diameter.
The presence of the elastic foundation, due to the non-homogeneity of the membrane and foundation energy terms, introduces a competition between the material, inherent, characteristic length scale $\ell_e$ and the diameter of the system $L$ and their ratio weights the elastic foundation term.

\bigskip

For $\d=1$, the transverse stretch energy of the bonding layer is of the same order as the membrane energy of the film and both shear and membrane energy of the bonding layer are infinitesimal.
The bonding layer can no longer store elastic energy by the means of shear deformations and in-plane displacements can undergo ``large'' transverse variations.
This mechanical behavior is interpreted as that of a layer allowed to ``slide'' on the substrate, still satisfying continuity of transverse displacements at the interface $\om$.
The loss of control (of the norm) of in-plane displacements within the bonding layer is due to the positive value of $\d$.
This requires enlarging the space of kinematically admissible displacements by relaxing the Dirichlet boundary condition on in-plane components of displacement on $\om$. This allows us to use a Korn-type inequality to infer their convergence properties.
Conversely, transverse displacements stay uniformly bounded within the entire system, the deformation mode of the bonding layer is a pure transverse stretch. 
In this regime, the value of the transverse strain is fixed by the mismatch between the film's and substrate's displacement, analogously to the shear term in the case of the in-plane elastic foundation.
Finally, from the optimality conditions (equilibrium equations in the bonding layer) follows that the profile of transverse displacements is linear and, owing to the continuity condition on $\oz$, they are coupled to displacement of the film.
The  latter undergoes shear-free (\ie Kirchhoff-Love) deformations and is subject to both inelastic strains and the transverse force.
This regime shows a stronger coupling between in-plane and transverse displacements of the two layers.
The associated limit model is that of a linear plate over a transverse, linear, elastic foundation.  
The qualitative behavior of system laying in the open region $\g,\d\in (\d, \infty)\times(0,1)$ is analogous to the limit case $\d=1$, although the order of magnitude of transverse displacements in the bonding layer differs by a factor $\e^{1-\d}$. 
More precisely, we are able to prove the following theorem:

\begin{theorem}[Plate over linear transverse elastic foundation]
\label{thm:out_of_plane_elast_found}
Assume that hypotheses~\ref{hyp:elast_scaling_law} and~\ref{hyp:geom_elast_scaling_law} hold and let $\ue$ denote the solution of Problem $\PeO$ for $0<\d\leq 1$, then:

\begin{enumerate}[i)]
\item the principal order of the displacement admits the scaling $\ue=(\e u_\alpha(\e), \e^{1-\d}u_3(\e))$;
\item there exists a function $u\in \CadrelKL(\OF)$ such that $\ue\to u$ converges strongly in $H^1(\Of)$;
\item the limit displacement $u$ belongs to the space $\CadrelKL(\O)$ and is a solution of the three-dimensional variational problem:\\
Find $u \in \CadrelKL(\O)$ such that:
\begin{multline}
	\label{eqn:lim_2d_plate_trsv_found}
	\iof \frac{2 \laf \muf}{\laf + 2 \muf} \eaa(u)\ebb(v) 
    + 2\muf \eab(u)\eab(v)\, dx
	+ \iob \frac{4 \mub (\lab +\mub)}{\lab + 2 \mub}\ett(u)\ett(v)  \, dx\\
	= \iof \left(c_4 \Aaa + c_5 \Att  \right)\ebb(v)  + c_6 \Aab \eab(v)\, dx + \int_{\op} p v_3\, dx',
\end{multline}
for all $v \in \CadrelKL(\O)$. Here, the in-plane displacement field $\ua$ is defined up to an infinitesimal rigid motion and the $c_i$'s are given by:
\begin{equation}
c_1=
\frac{2 \laf  \muf }{\laf +2 \muf },
\, c_2=
\frac{\laf ^2}{\laf +2 \muf },
\text{ and } c_3=2 \muf.
\label{eqn:coeffs_out_of_plane} 
\end{equation}
\item  
There exist two functions $\zeta_\alpha \in \H^1(\o)\cap \mathcal R(\Of)^\perp$ and $\zeta_3 \in \H^2(\o)$ such that the limit displacement field can be written under the following form:
\[
	\ua = \begin{cases}
		\zeta_\alpha(x'), &\text{ in }\OF\\
		\zeta_ \alpha(x') +(x_3+\hb)\pa \zeta_3(x'), &\text{ in }\OB,
	\end{cases} \qquad \text{and} \quad \ut=\zeta_3(x')\text{ in }\O,
\] 
and for all $\eta_\alpha \in \H^1(\o)\cap \mathcal R(\Of)^\perp,\eta_3 \in \H^2(\o)$ satisfies:
\begin{equation}
\label{eqn:limit_2d_plate}
	\begin{aligned}
		&\io 	\left\{ \frac{2 \laf \muf}{\laf + 2 \muf} 
 	 		\eaa(\eta)\ebb(\zeta) 
 		 		 	    + 2\muf  \eab(\eta)\eab(\zeta) \right\} dx'
           			= \io
	 \left(	c_1 \bAaa + c_2 \bAtt  \right) \ebb(\zeta)+  c_3 \bAab \eab(\zeta) dx',\\
 	&\io \left\{ \frac{ \laf \muf}{3(\laf + 2 \muf)} 
 	 		 		(\paa \eta_3\pbb\zeta_3)  
 		 	    +\frac{\muf}{3} 
        \pab \eta_3\pab\zeta_3
          	+ \frac{4 \mub (\lab +\mub)}{\lab + 2 \mub}\eta_3\zeta_3   \right\} dx'
		=
	 				 \int_{\o} p \zeta_3dx'.
	\end{aligned}
\end{equation}

\end{enumerate}
\end{theorem}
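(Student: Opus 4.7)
The overall plan is the standard dimension-reduction strategy via a priori estimates, weak compactness of rescaled strains, identification of the limit displacement's kinematic structure, and passage to the limit in the Euler--Lagrange equations; however, the distinctive feature of the regime $0<\delta\le 1$ is that in-plane displacements in the bonding layer are no longer controlled by the clamping condition on $\om$, so the whole argument must be carried out modulo infinitesimal rigid motions of the film and rely on a Korn/Poincaré-type inequality.

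First, I would establish the scaling claim (i) by an a priori upper bound: testing $E_\e$ against a well-chosen competitor built from a lifting of $w$ and a Kirchhoff-Love-type profile in the film produces a bound $E_\e(\ue)\le C$ independent of $\e$. Combined with coercivity of the quadratic form \eqref{eqn:scaled_en} on each layer, this yields uniform $L^2$-bounds on each weighted strain component: in $\Of$ on $\eab(\ue)$, $\e^{-1}(\pt \ua+\pa \ut)$ and $\e^{-2}\ett(\ue)$, and in $\Ob$ on $\e^\gamma\eab(\ue)$, $\e^\delta\pt \ua+\e^{\gamma-1}\pa \ut$ and $\e^{\delta-1}\ett(\ue)$. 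These bounds, together with the rescaling of $\ue$ by $(\e,\e^{1-\delta})$, are exactly what Lemma~\ref{lem:conv_scaled_strains} needs in order to extract weak limits of the scaled strains and of the rescaled displacement. The factor $\e^{1-\delta}$ for the transverse component is dictated by the bound on $\e^{\delta-1}\ett(\ue)$, and is the smallest scaling compatible with the coercivity estimate on the bonding layer.

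Next, I would establish (ii) and the Kirchhoff--Love structure in (iv). The uniform bound on the rescaled strain in $\Of$ and the vanishing of $\e^{-1}(\pt\ua+\pa\ut)$ force, in the limit, $\eit(u)=0$ in $\Of$, which together with the Poincaré/Korn-type lemma (applied in the quotient space $\dot H^1(\Of)$ since the Dirichlet BC on in-plane components in $\Ob$ is relaxed) upgrades weak convergence to strong convergence in $H^1(\Of;\R^3)$; the limit therefore belongs to $\CadrelKL(\Of)$. In $\Ob$, the bounds $\e^\gamma\eab(\ue)\to 0$ force $\eab(u)=0$ there in the limit (so in-plane displacements are affine in $x_3$), while $\ett(\ue)$ remains bounded so the transverse component stays in $H^1$. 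Continuity of traces on $\oz=\o\times\{0\}$ between $\Of$ and $\Ob$ and the clamping $u_3=0$ on $\om$ combine with these two rigidities to yield the piecewise decomposition stated in (iv): $u_3\equiv \zeta_3(x')$ in all of $\O$ (constant in $x_3$ in the film because $\ett(u)=0$ there, and forced to be linear in $x_3$ in $\Ob$ by $\eat(u)=0$ combined with the BC, but the continuity through $\oz$ and $u_3|_{\om}=0$ then pin down $\zeta_3$), and $\ua=\zeta_\alpha(x')-x_3\pa\zeta_3$ in $\Of$ by the standard Kirchhoff--Love identification, extended affinely into $\Ob$ so as to match the trace on $\oz$.

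Finally, for (iii) I would pass to the limit in the Euler--Lagrange equations associated with $E_\e$. Testing with admissible $v\in \CadrelKL(\O)$ and using the strong/weak convergences above, the film contribution converges to the standard plane-elasticity bilinear form with the effective coefficient $2\laf\muf/(\laf+2\muf)$ coming from the minimization of the limit energy density over $\ett(u)$ in $\Of$ (since $\ett(u)$ in $\Of$ is not pinned by a kinematic constraint but rather by the $\ett$-minimization of the $\laf,\muf$-quadratic form with the imposed $\eaa$). The bonding-layer contribution reduces, after using that only the transverse stretch survives and that its profile is affine in $x_3$, to a bulk term proportional to $\ett(u)\ett(v)$ in $\Ob$ with the coefficient $4\mub(\lab+\mub)/(\lab+2\mub)$ obtained by the analogous pointwise minimization of the bonding-layer density over the transverse shear $\eat$. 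Integrating the latter in $x_3$ on $(-\hb,0)$ and using the affine structure then converts it into the stated foundation term $\propto \zeta_3\eta_3/\hb$, yielding \eqref{eqn:limit_2d_plate}.

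The main obstacle, in my view, is the step that combines the relaxation of the Dirichlet BC on $\ua$ in $\Ob$ with the need to still control the rescaled in-plane displacement in order to identify its trace on $\oz$ with that of the film: one needs a Korn--Poincaré argument in the quotient space $\dot H^1(\Of)\cap\mathcal R(\Of)^\perp$ together with a careful bookkeeping of which strain components are controlled at which order in $\e$ in each layer, and to argue that the limit in-plane field in the bonding layer is not only affine in $x_3$ but actually given by the Kirchhoff--Love extension of the film field, with the correct slope $\pa\zeta_3$. Once this identification is in place, the derivation of the effective coefficients via pointwise minimization of the 3D density is routine.
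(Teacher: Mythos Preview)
Your overall architecture is right and matches the paper: a priori bounds on scaled strains (Lemma~\ref{lem:conv_scaled_strains}), Korn in the quotient space $\dot H^1(\Of)$ to compensate for the loss of the Dirichlet boundary condition on $\ua$, identification of the Kirchhoff--Love structure in the film, and passage to the limit in the Euler--Lagrange system. You also correctly single out the main technical difficulty.

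There is, however, a concrete error in how you obtain the bonding-layer effective coefficient. You write that $4\mub(\lab+\mub)/(\lab+2\mub)$ arises from ``pointwise minimization of the bonding-layer density over the transverse shear $\eat$''. It does not: the shear term $2\mub|\ktabl{\cdot}|^2$ is decoupled and its minimization simply gives $\limktabl=0$, contributing nothing to the reduced modulus. The coefficient in front of $\ett(u)\ett(v)$ comes instead from eliminating the \emph{in-plane} scaled strains in the bonding layer. In the paper this is done by testing the variational formulation with fields supported in $\Ob$, vanishing transverse component, and $v_\alpha=h_\alpha(x')g_\alpha(x_3)$ with $g_\alpha\in H^1_0(-\hb,0)$; dividing by $\e^\gamma$ and passing to the limit yields the optimality relation $\limkaabl=-\dfrac{\lab}{\lab+2\mub}\,\limkttbl$. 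Substituting this into $(\lab+2\mub)\limkttbl+\lab\,\limkaabl$ produces exactly $\dfrac{4\mub(\lab+\mub)}{\lab+2\mub}\,\limkttbl$. So the mechanism is the bonding-layer analogue of the film's $\limkttf$--optimality, with the roles of transverse and in-plane components swapped.

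A second, smaller issue: your kinematic argument in $\Ob$ is muddled. From $\e^\gamma\nl[\eab(\ue)]{\Ob}\le C$ with $\gamma>0$ you cannot conclude $\eab(u)=0$ in $\Ob$ (there is no control on $\eab(\ue)$ itself, and in fact the paper never establishes $H^1(\Ob)$-convergence of the in-plane components). Likewise, ``$\eab(u)=0$ implies $u_\alpha$ affine in $x_3$'' is a non sequitur. The paper sidesteps this entirely: it only needs weak $H^1(\O)$-convergence of $u_3$ (via the Poincar\'e lemma and the bound on $\ett(\ue)$ in both layers) and $H^1(\Of)$-convergence of $u_\alpha$ modulo rigid motions; the representation in $\Ob$ stated in item~(iv) is then obtained by choosing test functions of the displayed affine form in the limit three-dimensional problem, not by proving a rigidity for the limit displacement in $\Ob$.
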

Equation~\eqref{eqn:lim_2d_plate_trsv_found} is interpreted as the variational formulation of the three-dimensional equilibrium problem of a \textbf{linear elastic plate over a linear, transverse, elastic foundation}, whereas Equations~\eqref{eqn:limit_2d_plate} are equivalent coupled, two-dimensional, flexural and membrane equations of a plate over a linear, transverse, elastic foundation in which components $\eta_\a$ and $\eta_3$ are respectively the in-plane and transverse components of the displacement of the middle surface of the film $\o\times\{\hf/2\}$.
This latter model is, strictly speaking, the two-dimensional extension of the Winkler model presented in the introduction.
Note that the solution of the in-plane problem above is unique only up to an infinitesimal rigid movement. This is a consequence of the loss of the Dirichlet boundary condition on for in-plane displacements in the limit problem. 
In addition, no further compatibility conditions are required on the external load, since it exerts zero work on infinitesimal in-plane rigid displacements.  
Similarly to the in-plane problem, the non-dimensional formulation of the equilibrium problems highlights the emergence of an internal, material length scale. 
Introducing the new spatial variable $y':=x'/L$ where $L=\operatorname{diam}(\o)$, the equilibrium equations read:
\begin{equation}
\label{eqn:limit_2d_plate_nd}
	\begin{aligned}
		&\int_{\bar \o} 	\left\{   \eab(\eta)\eab(\zeta) + \frac{\laf }{\laf + 2 \muf} 
 		\eaa(\eta)\ebb(\zeta) 
     \right\} dx'
	= \int_{\bar \o}
	 \left(	\hat c_1 \bAaa + \hat c_2 \bAtt  \right) \ebb(\zeta)  \hat c_3 \bAab \eab(\zeta) dy',\\
 	&\int_{\bar \o} \left\{ 
 	\pab \eta_3\pab\zeta_3 + 
 	\frac{ \laf }{\laf + 2 \muf} 
 		\paa \eta_3\pbb\zeta_3  
   	+ \frac{L^2}{\tilde\ell_e^2}\eta_3\zeta_3   \right\} dx'
	=
	 \int_{\bar \o} \hat p \zeta_3dy', \qquad \forall \zeta_\a \in H^1(\o), \zeta_3\in H^2(\o),
	\end{aligned}
\end{equation}
where the internal elastic length scale of the plate over transverse foundation system is:
\begin{equation}
	\label{eqn:int_length_oop}
	\tilde \ell_e = \sqrt{\frac{\muf(\lab + 2 \mub)}{12 \mub (\lab +\mub)}\hf \hb},
\end{equation}
$\hat p=\frac{p}{\muf \hf/3}$, and $c_i, \bar \o$ are the same as the definitions above.

\bigskip

The next section is devoted to the proof of the theorems.

\makeatletter{}\section{Proof of the dimension reduction theorems} \label{sec:proofs_of_the_theorems}
\subsection{Preliminary results} \label{sub:preliminary_results}

It is useful to introduce the notion of \emph{scaled strains}.
In the film, to an admissible field $v\in \H^1(\OF;\R^3)$ we associate the sequence of $\e$-indexed tensors $\kev\in L^2(\OF; \MtwoSym)$
whose components are defined by the following relations:
 \begin{equation}
	\label{eqn:resc_str_film}
 	\kttv=\frac{\ett(v)}{\e^2} ,
 	 \ktav=\frac{\eat(v)}{\e}, 
 	 \text{ and }   \kabv=\eab(v).
 \end{equation}

In the bonding layer, to an admissible field $v\in \left\{\hat v_i\in\H^1(\OB), \hat v_i=0 \onn \om \right\}$ we associate the tensor $\kevbl \in L^2(\OB; \MtwoSym )$, whose components are defined by the following relations:

\begin{equation}
	\label{eqn:resc_str_bl}
	 	\kttblv=\e^{\delta-1}\ett(v) , 
 	\ktablv=\frac{1}{2}\left( \e^\delta\pt \va + \e^{\gamma-1} \pa v_3 \right) ,  \text{ and } 
 	\kabblv=\e^\gamma\eab(v).
\end{equation}
Rewriting the energy~\eqref{eqn:scaled_en} the definitions above, the rescaled energy $ E_\e(v)$ reads:
\begin{multline}
\label{eqn:scaled_en_scaled_str}
 E_ \e(v)=
	\frac{1}{2}\iof \laf |\kttv +\kaav|^2 + 2\muf |\ktav|^2
	+ 2\muf \left( |\kttv|^2+|\kabv|^2 \right) \, dx\\
	+ \frac{1}{2}\iob \lab |\kttblv +\kaablv|^2 + 2\mub |\ktablv|^2
	+ 2\mub \left( |\kttblv|^2+|\kabblv|^2 \right)\, dx\\
	- \iof \left( 2\muf \Att + \laf \Aaa \right)  \kttv
	+ \laf \left(\Aaa + \Att \right)\kbbv +  2\muf \Aab \kabv \, dx\\
	-\int_{\op} p v_3 \, dx'+\int_\O (\elast_f)_{ijhk} \strload_{ij}:\strload_{hk}\, dx.
\end{multline}

The solution of the convex minimization problem $\PeO$ is also the unique solution of the following weak form of the first order stability conditions:
\begin{equation}
\label{eqn:vf_resc}
	\mathcal P(\e;\O):  \text{Find } \ue\in \Cadz(\O) \st  E_\e'(\ue)(v)=0,\; \forall v\in \Cadz.
\end{equation}
Here, by $E_\e'(u)(v)$ we denote the Gateaux derivative of $E_\e$ in the direction $v$. For ease of reference, its expression reads:
\begin{equation}
\label{eqn:gat_deriv_en}
	\begin{split}
E_\e'(u)(v)	&=\int_\Of \elast_f \ke{u}:\kev dx+\int_\Ob \elast_b \kebl{u}:\kevbl dx- \iof \elast \Ae :\ke{v} dx- \int_{\op} p v_3 dx'
\\
&=\iof \left\{ \left( (\laf+2\muf) \ktt{u} + \laf \kaa{u} \right) \kttv  + 2\muf \kta{u}\ktav  \right\}dx \\
	&\qquad+ \iof \left\{\laf \left( \ktt{u}+\kaa{u} \right) \kbbv + 2\muf \kab{u}\kabv \right\}dx\\
	&\qquad+ \iob \left\{ \left( (\lab+2\mub) \kttbl{u} + \lab \kaabl{u} \right) \kttblv  + 2\mub \ktabl{u}\ktablv \right\}dx \\
	&\qquad+ \iob \left\{  \lab \left( \kttbl{u}+\kaabl{u} \right) \kbbblv+ 2\mub \kabbl{u}\kabblv \right\}dx \\
	&\qquad- \iof \left\{  \left( 2\muf \Att + \laf \Aaa \right)  \kttv+ \laf \left(\Aaa + \Att \right)\kbbv +  2\muf \Aab \kabv \right\}dx
	-\int_{\op} p v_3 dx'. \\
\end{split}
\end{equation}

We establish preliminary results of convergence of scaled strains, using standard arguments based on a-priori energy estimates exploiting first order stability conditions for the energy. To this end, we need three straightforward consequences of Poincar\'e's inequality: one along a vertical segment, one on the upper surface and one in the bulk, which we collect in the following Lemma.

\begin{lemma}[Poincar\'e-type inequalities] 
Let $u\in L^2(\o)\times H^1(-\hb, \hf)$ with $u(x', -\hb)=0, \aee\ x'\in \o$. Then there exist two constants $C_1$ depending only on $\O$ and $C_2$ depending only on $\hf$ and $\hb$ such that:
\label{lem:poincare}
\begin{align}
\quad\nl[u(x', \cdot)]{(-\hb, \hf)}&\leq C_1(\hb, \hf) \left(  \nl[\pt u(x', \cdot)]{(0, \hf)}+ \nl[\pt u(x', \cdot)]{(-\hb, 0)} \right) \quad\aee\ x'\in \o,\label{eq:poincare_segment}\\
\nl[u]{\op}&\leq C_2(\O) \left( \nl[\pt u]{\Of} + \nl[\pt u]{\Ob} \right),\label{eq:poincare_surface}\\
\nl[u]{\O} & \leq C_2(\O) \left( \nl[\pt u]{\Of} + \nl[\pt u]{\Ob} \right).\label{eq:poincare_bulk}
\end{align}
\end{lemma}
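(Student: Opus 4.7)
The plan is to deduce all three estimates from a single pointwise representation of $u$ along vertical segments. For almost every $x' \in \o$, the function $x_3 \mapsto u(x',x_3)$ belongs to $H^1(-\hb,\hf)$ and vanishes at $x_3=-\hb$, so by the fundamental theorem of calculus,
\begin{equation*}
u(x',x_3) \;=\; \int_{-\hb}^{x_3} \pt u(x',s)\,ds,\qquad x_3\in(-\hb,\hf).
\end{equation*}
Splitting the integral at $s=0$ whenever $x_3>0$ and applying Cauchy--Schwarz on each sub-interval yields the pointwise bound
\begin{equation*}
|u(x',x_3)|^2 \;\leq\; (\hb+\hf)\bigl(\nl[\pt u(x',\cdot)]{(-\hb,0)}^2+\nl[\pt u(x',\cdot)]{(0,\hf)}^2\bigr),
\end{equation*}
valid for a.e. $x'\in\o$ and every $x_3\in(-\hb,\hf)$. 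Using $(a^2+b^2)^{1/2}\leq a+b$ for $a,b\geq0$ turns this into a bound of the form required by the right-hand sides of \eqref{eq:poincare_segment}--\eqref{eq:poincare_bulk}.

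To obtain \eqref{eq:poincare_segment}, I would simply integrate the squared pointwise bound in $x_3$ over $(-\hb,\hf)$, which produces a factor $(\hb+\hf)^2$ and gives $C_1=C_1(\hb,\hf)$. For \eqref{eq:poincare_surface}, I would evaluate the pointwise bound at $x_3=\hf$ (this is the trace on $\op$, which is well-defined since $u(x',\cdot)\in H^1(-\hb,\hf)$) and integrate in $x'$ over $\o$; by Fubini the two resulting terms are precisely $\nl[\pt u]{\Of}^2$ and $\nl[\pt u]{\Ob}^2$, giving a constant $C_2$ that depends on $|\o|$ and $\hb+\hf$, i.e. on $\O$. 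For \eqref{eq:poincare_bulk}, I integrate the pointwise bound over the whole cylinder $\o\times(-\hb,\hf)=\O$, which yields the same structure with an additional factor $(\hb+\hf)$ absorbed into $C_2(\O)$.

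None of the steps here are genuinely hard: the entire argument is a careful bookkeeping of Fubini and Cauchy--Schwarz, combined with the vanishing trace at $x_3=-\hb$. The only subtlety worth flagging is that the hypothesis only guarantees $H^1$ regularity in the vertical variable (no $x'$-regularity is assumed), which is why we must work section-by-section and invoke Fubini at the end, rather than relying on a global Sobolev trace theorem on $\O$. The same observation ensures that the trace on $\op$ used in \eqref{eq:poincare_surface} is meaningful without any additional hypothesis on $u$.
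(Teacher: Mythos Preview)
Your proof is correct and follows essentially the same route as the paper: represent $u$ along each vertical segment via the fundamental theorem of calculus using the vanishing trace at $x_3=-\hb$, apply Cauchy--Schwarz to get a pointwise bound, then integrate over the segment, the top face, or the cylinder via Fubini. Your explicit splitting at $s=0$ and your remark about the anisotropic regularity hypothesis (only $H^1$ in $x_3$) are minor clarifications, but the argument is the same.
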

\begin{proof}
Let $u\in L^2(\o)\times H^1(-1, 1)$ be such that  $u(x', -\hb)=0$ for \emph{a.e.}  $x' \in \o$. Then
\begin{align*}
	|u(x', x_3)| = |u(x', x_3)-u(x', -\hb)| = & \left|\int_{-\hb}^{x_3}\pt u(x', s)ds\right|\\
	 & \leq  \int_{-\hb}^{\hf}\left|\pt u(x', s) \right|ds \\
	 	 & \leq \nl[\pt u]{L^1(-\hb, \hf)}\\
	 & \leq (\hf + \hb)^{1/2}\nl[\pt u]{(-\hb, \hf)}
	\end{align*}
Consequently, on segments $\{x'\}\times(-\hb, \hf)$:
\[
	\begin{aligned}
	\nl[u(x', \cdot)]{(-\hb, \hf)} & \leq \left( \int_{-\hb}^{\hf} (\hf+\hb) \nl[\pt u]{(-\hb, \hf)}^2\right)^{1/2}\\
	 & \leq (\hf+\hb) \nl[\pt u]{(-\hb, \hf)}  
	\end{aligned}, 
\]
which gives the first inequality. On the upper surface $\op$:
\[
	\begin{aligned}
	\nl[ u]{\op}&\leq \left( \int_{\op} (\hf+\hb)^{1/2} \nl[\pt u]{(-\hb, \hf)}^2\right)^{1/2}\\
	& \leq |\O| \nl[\pt u]{\O}  
	\end{aligned}, 
\]
gives the second inequality. Finally, in the bulk:
\[
	\begin{aligned}
	\nl[ u]{\O}&=\left( \int_{\O}|u|^2 dx \right)^{1/2} 
	\leq \left( \int_{\o} \int_{-\hb}^{\hf} (\hf+\hb)^{1/2} \nl[\pt u]{L^2(-\hb, \hb)}^2\right)^{1/2}\\
	& \leq |\O| \nl[\pt u]{\O}  
	\end{aligned}, 
\]
which completes the claim.
\end{proof}
\begin{remark}
The crucial element in the above Poincar\'e-type inequalities is the existence of a Dirichlet boundary condition at the lower interface.
This allows to derive bounds on the components of displacements by integration over the entire surface $\o$, of the estimates constructed along segments $\{x'\}\times(-\hb, \hf)$.
\end{remark}

\begin{lemma}[Uniform bounds on the scaled strains]
\label{lem:conv_scaled_strains}
Suppose that hypotheses~\ref{hyp:geom_elast_scaling_law} and~\ref{hyp:elast_scaling_law} apply, and that $\delta \le 1$.  Let $\ue$ be the solution of $\mathcal P(\e;\O)$. Then, there exist constants $C_1,C_2>0$ such that for sufficiently small $\e$,
\begin{align}
	\nl[\kue]{\Of} & \le C_1\label{eq:conv_scaled_strains_1},\\
	\nl[\kttblue]{\Ob} & \le C_2\label{eq:conv_scaled_strains_2}.
\end{align}

\end{lemma}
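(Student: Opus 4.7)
The plan is to exploit that $\ue$ is a minimizer, so $E_\e(\ue) \le E_\e(0) = F$. Recalling the form \eqref{eqn:scaled_en_scaled_str} of the scaled energy, the positive quadratic parts vanish at $v=0$, so this inequality rewrites as
\[
Q_f(\kue) + Q_b(\kblue) \le L_f(\kue) + \int_{\op} p\, \ue_3\, dx',
\]
where $Q_f$ and $Q_b$ are the positive definite quadratic forms in the scaled strains arising from the elastic energy densities in $\Of$ and $\Ob$, and $L_f$ is the linear functional coming from the inelastic strain $\strload$.

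The first step is to bound the quadratic forms from below by the $L^2$-norms of the scaled strains using positive definiteness of the Hooke tensor: $Q_f(\kue) \ge c\, \nl[\kue]{\Of}^2$ and $Q_b(\kblue) \ge c\, \nl[\kblue]{\Ob}^2$ with $c>0$ independent of $\e$. The latter in particular dominates $\nl[\kttbl{\ue}]{\Ob}^2$. The inelastic-strain term is controlled by Cauchy--Schwarz as $|L_f(\kue)| \le C\,\nl[\strload]{\Of}\, \nl[\kue]{\Of}$.

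The delicate term is the surface force. Using the Dirichlet condition $\ue_3=0$ on $\om$ and the surface Poincar\'e inequality \eqref{eq:poincare_surface} applied to $\ue_3$,
\[
\nl[\ue_3]{\op} \le C\bigl(\nl[\pt \ue_3]{\Of} + \nl[\pt \ue_3]{\Ob}\bigr) = C\bigl(\e^2 \nl[\ktt{\ue}]{\Of} + \e^{1-\d}\nl[\kttbl{\ue}]{\Ob}\bigr),
\]
since definitions \eqref{eqn:resc_str_film}--\eqref{eqn:resc_str_bl} give $\pt \ue_3 = \ett(\ue) = \e^2 \ktt{\ue}$ in $\Of$ and $\ett(\ue) = \e^{1-\d}\kttbl{\ue}$ in $\Ob$. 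This is precisely where the hypothesis $\d\le 1$ is essential: both exponents $2$ and $1-\d$ are nonnegative, so the factors stay bounded (in fact, go to zero) as $\e\to 0$. Hence $\bigl|\int_{\op} p\,\ue_3 \,dx'\bigr|\le C\nl[p]{\op}\bigl(\e^2\nl[\ktt{\ue}]{\Of} + \e^{1-\d}\nl[\kttbl{\ue}]{\Ob}\bigr)$.

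Assembling the estimates and applying Young's inequality with a sufficiently small parameter to each linear term, a fraction of $\nl[\kue]{\Of}^2$ and $\nl[\kttbl{\ue}]{\Ob}^2$ can be absorbed into the left-hand side, leaving an $\e$-independent constant on the right; the bounds \eqref{eq:conv_scaled_strains_1}--\eqref{eq:conv_scaled_strains_2} follow. The main obstacle to relaxing the argument is precisely the hypothesis $\d\le 1$: for $\d>1$ the factor $\e^{1-\d}$ blows up and this direct energy estimate no longer controls $\kttbl{\ue}$, so that regime would require a more refined treatment of the surface loading.
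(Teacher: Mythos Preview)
Your proof is correct and follows essentially the same route as the paper: coerce the quadratic energy by positive definiteness of the Hooke tensor, control the surface-force term via the Poincar\'e inequality~\eqref{eq:poincare_surface} and the relations $\pt\ue_3=\e^2\ktt{\ue}$ in $\Of$, $\pt\ue_3=\e^{1-\d}\kttbl{\ue}$ in $\Ob$, and close the estimate using $\d\le 1$. The only cosmetic difference is that you compare $E_\e(\ue)\le E_\e(0)$ while the paper tests the Euler--Lagrange equation~\eqref{eqn:vf_resc} with $v=\ue$; for a quadratic convex energy these yield the same inequality up to a harmless factor, and the remainder of the argument is identical.
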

\begin{proof}

Recalling that  $\rho_\mu = \mu_b / \mu_f$ we have :
\begin{align}
	2\muf \left( \nl[\kue]{\Of}^2+\rho_\mu\nl[\kttblue]{\Ob}^2 \right) &=  2\muf  \nl[\kue]{\Of}^2+ 2\mu_b\nl[\kttblue]{\Ob}^2 \notag \\
	& \le 2\muf  \nl[\kue]{\Of}^2+ 2\mu_b\nl[\kblue]{\Ob}^2 \notag \\
	& \le  \iof \elast_f \kue:\kue dx+ \iob \elast_b \kblue:\kblue dx,\notag 
\end{align}
where we have used the fact that $2\mu a_{ij} a_{ij}\leq \elast a:a$, which holds when $\elast$ is a Hooke tensor, for all symmetric tensors $a$, see~\cite{Ciarlet1997}.

Plugging $v=\ue$ in~\eqref{eqn:vf_resc}, we get that 
$$
	\iof \elast_f \kue:\kue dx+ \iob \elast_b \kblue:\kblue dx = \iof \elast_f \Ae :\kue dx + \int_{\op} \pe \uet dx',
$$
so that there exists a constant $C$ such that 
$$	  \nl[\kue]{\Of}^2+\rho_\mu\nl[\kttblue]{\Ob}^2 \le C\left( \nl[\kue]{\Of}+ \nl[\uet]{\op} \right),
$$
and for another constant (still denoted by $C$),
\begin{equation*}
	  \nl[\kue]{\Of}^2+\nl[\kttblue]{\Ob}^2 \le C\left( \nl[\kue]{\Of}+ \nl[\uet]{\op} \right).
	\label{eqn:energy_est}
\end{equation*}
Using the identity $(a+b)^2 \le 2(a^2+b^2)$, we get that 
\begin{equation*}
	  \left(\nl[\kue]{\Of}+\nl[\kttblue]{\Ob}\right)^2 \le C\left( \nl[\kue]{\Of}+ \nl[\uet]{\op} \right),
\end{equation*}
which combined with~\eqref{eq:poincare_surface} gives that
\begin{equation*}
	  \left(\nl[\kue]{\Of}+\nl[\kttblue]{\Ob}\right)^2 \le C\left( (1+\e^2) \nl[\kue]{\Of}+\e^{1-\delta}\nl[\kttblue]{\Ob} \right).
\end{equation*}
Recalling finally that $\delta \le 1$, we obtain~\eqref{eq:conv_scaled_strains_1} and~\eqref{eq:conv_scaled_strains_2} for sufficiently small $\e$.
\end{proof}
We are now in a position to prove the main dimension reduction results.

\subsection{Proof of Theorem~\ref{thm:in_of_plane_elast_found}} \label{sub:in_plane_elastic_foundation}

\newenvironment{statement}{\begin{em}}{\end{em}}

For ease of read, the proof is split into several steps.
\begin{enumerate}[i)]
\item\begin{statement}
Convergence of strains.
\end{statement}
Plugging~\eqref{eqn:resc_str_film} and~\eqref{eqn:resc_str_bl} in~\eqref{eq:conv_scaled_strains_1} and~\eqref{eq:conv_scaled_strains_2}, we have that
\begin{equation}
\label{eqn:bound_strains_f_dzero}
\nl[\ette(\ue)]{\OF}\leq  C\e^2, \quad
\nl[\eate(\ue)]{\OF}\leq C\e \text{, and }  
\nl[\eabe(\ue)]{\OF}\leq C; 
\end{equation}
and in the bonding layer:
\begin{equation}
\label{eqn:bound_strains_bl_dzero}
\nl[\ette(\ue)]{\OB}\leq C \e, \quad
\nl[\pt \uea]{\OB} \leq C,\quad 
\e^{\g-1}\nl[\pa \uet]{\OB} \leq C  \quad\text{and}\quad
\e^{\gamma}\nl[\eab(\ue)]{\OB}\leq C.
\end{equation}
These uniform bounds imply that there exist functions $\limeab\in L^2(\OF)$ such that $\eabe \wto \limeab$ weakly in $L^2(\OF)$, that $\eite(\ue)\to 0$ strongly in $L^2(\OF)$ and in particular that $\nl[\pt \uea]{\Of}\leq C\e$. 
Moreover $\ette(\ue)\to 0$ strongly in $L^2(\OB)$.
 \item\begin{statement}Convergence of scaled displacements.
\end{statement}

Using Lemma~\ref{lem:poincare} (Equation~\eqref{eq:poincare_bulk}) combined with \eqref{eqn:bound_strains_f_dzero} and~\eqref{eqn:bound_strains_bl_dzero} we can write:
\begin{subequations}
\label{eqn:bound_displ_dzero}
\begin{gather}
\label{eqn:bound_displ_transv_dzero}
		\nl[\uet]{\O} \leq C \left( \nl[\ett(\ue)]{\Of} + \nl[\ett(\ue)]{\Ob} \right) \leq C (\e^2 + \e) \leq C\e.
				\\ 
\label{eqn:bound_displ_inplane_dzero}
		 \nl[\uea]{\O} \leq C \left( \nl[\pt \uea]{\Of} + \nl[\pt \uea]{\Ob} \right) \leq C (\e + 1)\leq C.
		 \end{gather}
\end{subequations}
In addition, recalling from \eqref{eqn:bound_strains_f_dzero} that all components of the strain are bounded within the film, we infer that a function $u\in H^1(\Of)$ exists such that 
\begin{equation}
	\ue\to u \text{ strongly in } L^2(\Of), \text{ and } \ue \wto u \text{ weakly in } H^1(\Of).
\end{equation}
Similarly, by the uniform boundedness of $\ue$ in $L^2(\Ob)$,
it follows that $u$ can be extended to a function in $L^2(\O)$ such that
\begin{equation}
	\ue \wto u \text{ weakly in }L^2(\Ob).
\end{equation}
For $\aee\ x'\in \o$, we define the field $\vexp(x_3) = \ue(x', x_3)$.
Then $\vexp(x_3)\in H^1(-\hb, \hf)$ and, from the convergences established for $\ue$, it follows that there exists a function $v\in H^1(-\hb, \hf)$  such that $\vexp \wto v$ weakly in $H^1(-\hb, \hf)$, for $\aee\ x'\in \o$. 

Finally, from the first and second estimate in Equation~\eqref{eqn:bound_strains_f_dzero}, follows that the limit $u$ is such that $\eit(u)=0$, \ie the limit displacement belongs to the Kirchhoff-Love subspace $\CadKL(\Of)$ of sufficiently smooth shear-free displacements in the film.
Moreover, since the limit $u$ is such that $\pt \ua = 0$
 the in-plane limit displacement $\ua$ is independent of the transverse coordinate, that is to say:
\begin{equation}
	\label{eqn:ua_indep_x3_dzero}
 	\uea \wto \ua \text{ weakly }\inn H^1(\Of),
\end{equation}
where $\ua$ is independent of $x_3$, and hence it can be identified with a function $\ua\in H^1(\o)$, which we shall denote by the same symbol.

\item \begin{statement} Optimality conditions of the scaled strains.
The components of the weak limits $\kappa_{ij}\in L^2(\OF)$ 
 of subsequences of $\kue$ satisfy:
\begin{equation}
	\label{eqn:opt_scaled_strains_dzero}
		\limkttf= -\frac{\laf}{\laf+2 \muf}\limkaaf+ \frac{ 2\muf}{\laf+2 \muf}\Att +  \frac{ \laf}{\laf+2 \muf}\Aaa, \, 
	\limktaf=0, \text{ and } 
	\limkabf=\eab(u).
\end{equation}
\end{statement}
As a consequence of the uniform boundedness of sequences $\ke \ue$ and $\kebl \ue$ in $L^2(\Of; \MtwoSym)$ and $L^2(\Ob; \MtwoSym)$ established in Lemma~\ref{lem:conv_scaled_strains}, it follows that there exist functions $\limkf\in L^2(\Of, \MtwoSym)$ and $\limkbl \in L^2(\Ob; \MtwoSym)$ such that:
\begin{equation}
	\label{eqn:conv_resc_strains_dzero}
		\ke \ue \wto \limkf \text{ weakly in }L^2(\Of, \MtwoSym), \text{ and } 
	\kebl \ue \wto \limkbl \text{ weakly in }L^2(\Ob, \MtwoSym).
\end{equation}
The first two relations in~\eqref{eqn:opt_scaled_strains_dzero} descend from optimality conditions for the rescaled strains.
Indeed,  
taking in the variational formulation of the equilibrium problem test fields $v$ such that $\va= 0$ in $\O$, $ \vt = 0$ in $\OB$ and $ \vt \in \H^1(\OF)$ with $\vt =0$ on $\oz$ and multiplying by $\e^2$, we get:
\begin{multline}
\iof \left( (\laf+2\muf) \kttue + \laf \kaaue \right) {\ett(v)}  dx 
	=\iof \left\{ \left( 2\muf \Att + \laf \Aaa \right)  \ett(v)  \right\}dx +\\
	\e\iof 2\muf \ktaue \pa \vt
	+\e^2 \int_{\op} p \test \vt\,dx'.
	\end{multline}
Owing to the convergences established above for $\ke \ue$, $\kebl \ue $, 
since $\pa \vt$ and $\vt$ are uniformly bounded, we can pass to the limit $\e\to 0$ and obtain:
\[
	\int_{\OF}\left( (\laf+2 \muf)\limkttf + \laf \limkaaf \right)  \ett(v)dx= \int_{\OF} \left(  2\muf\Att +  \laf\Aaa  \right) \ett(v)dx.
	\]
From the arbitrariness of $v$, using arguments of the calculus of variations, we localize and integrate by parts further enforcing the boundary condition on $\oz$.
The optimality conditions in the bulk and the associated natural boundary conditions for the limit rescaled transverse strain $\limkttf$ follow:
\begin{equation}
\label{eqn:optimality_ktt_dzero}
		\limkttf = -\frac{\laf}{\laf+2 \muf}\limkaaf+ \frac{ 2\muf}{\laf+2 \muf}\Att +  \frac{ \laf}{\laf+2 \muf}\Aaa \quad \inn \Of, \qquad \text{and} \quad {\partial_3}\limkttf = 0 \onn \op.
\end{equation}

Similarly, consider 
 test fields $ v \in \H^1(\Of)$ such that $ \vt = 0$ in $\O$, $ \va = 0$ in $\OB$ and $ \va \in H^1(\OF)$ with $ \va=0$ on $\oz$.
 Multiplying the first order optimality conditions by $\e$, they take the following form:
\begin{multline}
\iof   2\muf \ktaue  \pt \va 
    dx
	= \e \iof  \left\{  \laf \left( \kttue+\kaaue \right) \ebb(v) 
    + 2\muf \kabue\eab(v) \right\}dx+\\
    \e\iof \left\{ \laf \left(\Aaa + \Att \right)\ebb(v) +  2\muf \Aab \eab(v) \right\}dx.
		 \end{multline}
The left-hand side converges to $\iof   2\muf \limktaf  \pt \va$ as $\e \to 0$, whereas  
the right-hand side converges to $ 0 $, since $ \eab(v)$ is bounded.
 We pass to the limit for $\e\to0$ and obtain:
\[
\iof  2\muf \limktaf  \pt \va =0.
\]
By integration by parts and enforcing boundary conditions we deduce that $\limktaf=0 \inn \Ob$, giving the second equation in~\eqref{eqn:optimality_ktt_dzero}.
Finally, by the definitions of rescaled strains~\eqref{eqn:resc_str_film} and the convergence of strains established in step i), we deduce that $\limkabf=\limeab$. But since $\ue \wto u $ in $H^1(\Of)$ implies the weak convergence of strains, in particular $\limeab = \eab(u)$, then
\[
	\limkabf=\eab(u),
	\]
which completes the claim.

\item \begin{statement}
Limit equilibrium equations
\end{statement}
Now, take test functions $v$ in the variational formulation of Equation~\eqref{eqn:vf_resc} 
  such that 
$\eit(v)=0$ in $\Of$ and $\ett(v)=0$ in $\Ob$, we get:
\begin{multline}
\iof \left\{ \laf \left( \kttue+\kaaue \right) \ebb(v) 
    + 2\muf \kabue\eab(v) \right\}dx
	+ \iob \left\{ 2\muf \ktablue\pt \va + \lab \left( \kttblue+\kaablue \right) \e\ebb(v) \right\}dx \\
	= \iof \left\{ \laf \left(\Aaa + \Att \right)\ebb(v) +  2\muf \Aab \eab(v) \right\}dx.
	\end{multline}
Since all sequences converge, we pass to the limit $\e\to 0$ using the first two optimality conditions in~\eqref{eqn:optimality_ktt_dzero} and obtain:
\begin{multline}
\label{eqn:limit_eq_equations_kappa_dzero}
\iof \left\{ \frac{2\muf\laf}{\laf+2 \muf}\limkaaf \ebb(v) 
    + 2\muf \limkabf\eab(v) \right\}dx
	+ \iob \left\{ 2\mub \pt \ua \pt \va \right\}dx\\
		= \iof \left\{ \left(c_1 \Aaa + c_2 \Att  \right)\ebb(v)  + c_3 \Aab \eab(v) \right\}dx
		 \end{multline}
where $c_1, c_2, c_3$ are the coefficients:
\[
c_1=
\frac{2 \laf  \muf }{\laf +2 \muf },
\quad c_2=
\frac{\laf ^2}{\laf +2 \muf },
\text{ and }\quad c_3=2 \muf. 
\]
Using the last relation in~\eqref{eqn:optimality_ktt_dzero} we obtain the variational formulation of the three-dimensional elastic equilibrium problem for the limit displacement $u$, reading:
\begin{multline}
\label{eqn:3d_limit_membrane_energy_unit_thick}
\iof   \left\{ \frac{2\laf \muf}{\laf+2 \muf}\eaa(u) \ebb(v) 
    + 2\muf \eab(u)\eab(v) \right\} dx
	+ \iob 2\mub \pt \ua  \pt \va dx\\
		= \iof  \left\{ \left(  c_1\Aaa +c_2\Att  \right) \ebb(v) +  c_3\Aab \eab(v) \right\}dx\\ \qquad \forall v \in \left\{ \test v_i\in \H^1(\O), \eit(\test v)=0 \inn \Of, \ett(\test v)=0 \inn \Ob \right\}.
\end{multline}

\item \begin{statement}
Two-dimensional problem.
\end{statement}

Owing to~\eqref{eqn:ua_indep_x3_dzero}, the in-plane limit displacement in the film is independent of the transverse coordinate;
let us hence consider test fields of the form:
\begin{equation}
v_ \alpha(x', x_3)=
 	\begin{cases}
 		\dfrac{(x_3+\hb)}{\hb} v_\alpha(x'), &\text{ in }\Ob\\
 		v_\alpha(x'), &\text{ in }\Of\\
 	\end{cases}, \text{ where } \va \in \H^1(\o).
 \end{equation}
 They provide pure shear and shear-free deformations in the bonding layer and film, respectively.
 For such test fields equilibrium equations read:
\begin{multline}
\io \left\{ \int_{0}^{\hf} \left( \frac{2\laf \muf}{\laf+2 \muf}\eaa(u) \ebb(v) 
    + 2\muf \eab(u)\eab(v) \right)dx_3 
	+ \int_{-\hb}^{0}  2\mub \pt \ua  \pt \va dx_3\right\}dx'\\
		= \io \left\{ \int_{0}^{\hf} \left( \left(  c_1\Aaa +c_2\Att  \right) \ebb(v) +  c_3\Aab \eab(v)\right)dx_3  \right\}dx'.
			\label{eqn:limit_eq_dzero_opt_prof}
\end{multline}
Recalling that $\ua$ is independent of the transverse coordinate in the film, and that for any admissible displacement $v\in \Cadz(\O)$ the following holds:
\[
	\int_{-\hb}^0 \pt  v(x', x_3)dx_3= v(x', 0)- v(x', -\hb)=v(x', 0), \quad\aee\ x'\in \o, 
\]
we integrate~\eqref{eqn:limit_eq_dzero_opt_prof} along the thickness and obtain:
\begin{multline*}
\io  \left\{ \frac{2\laf \muf \hf}{\laf+2 \muf}\eaa(u) \ebb(v) 
    + 2\muf \hf\eab(u)\eab(v)
	+ \frac{2\mub }{\hb} \ua(x', 0)   \va \right\}dx' \\
		= \io \left\{ \vphantom{\frac{\lambda}{\mu}} \hf\left(  c_1 \bAaa +c_2\bAtt  \right) \ebb(v) +  \hf c_3\bAab \eab(v) \right\}dx',
	\quad \forall \va \in \H^1(\o),
\end{multline*}
where overline   denote  averaging over the thickness: $\bar \Phi_{ij}:=\frac{1}{\hf}\int_0^{\hf} \Phi_{ij} dx_3$.
The last equation is the limit, two-dimensional, equilibrium problem for a linear elastic membrane on a linear, in-plane, elastic foundation and concludes the proof of item ii) in Theorem~\eqref{thm:in_of_plane_elast_found}.

\item \begin{statement}
Strong convergence in $\H^1(\OF)$
\end{statement} 

In order to prove the strong convergence of $\ue$ in $\H^1(\Of)$ it suffices to prove that $\nl[\eabe(\ue)-\eab(u)]{\Of}\to 0$ as $\e\to0$, as the strong convergence in  $L^2(\Of)$ of the components $\eite(\ue)$ has been already shown in step iii) of the proof.
Exploiting the convexity of the elastic energy, we can write:
\[
	\begin{split}
	2\muf\nl[\eabe(\ue)-\eab(u)]{\Of}&\leq 2\muf\nl[\kabue-\limkabf]{\O}\\ &\leq \iof \elast_f (\kue-\limkf):(\kue-\limkf)dx + \iob \elast_b (\kblue-\limkbl):(\kblue-\limkbl)dx \\
	&= \iof \elast_f \limkf:(\limkf - 2\kue)dx + \iob \elast_b \limkbl:(\limkbl- 2\kblue)dx \\
	& \qquad+ \iof \elast_f \kue:\kue dx+ \iob \elast_b \kblue:\kblue dx\\
	&= \iof \elast_f \limkf:(\limkf - 2\kue)dx + \iob \elast_b \limkbl:(\limkbl- 2\kblue)dx + \mathcal L(\ue).
\end{split}
\]
where the first inequality holds from the definitions of rescaled strains, and the last equality holds by virtue of the equilibrium equations (it suffices to take the admissible $\ue$ as test field in Equation~\eqref{eqn:vf_resc}).

By the convergences established for $\kue, \kblue$ and $\ue$, we can pass to the limit and get:
\[
	\lim_{\e\to0}\left( 2\muf\nl[\eabe(\ue)-\eab(u)]{\Of} \right) \leq \mathcal L(u) - \iof \elast_f \limkf:\limkf \, dx - \iob \elast_b \limkbl:\limkbl\, dx=0
\]
where the last equality gives the desired result and holds by virtue of the three-dimensional variational formulation of the limit equilibrium equations \eqref{eqn:limit_eq_equations_kappa_dzero}.
This concludes the proof of Theorem~\ref{thm:in_of_plane_elast_found}.
\end{enumerate}
\qed

\subsection{Proof of Theorem~\ref{thm:out_of_plane_elast_found}} \label{sec:out_of_plane_elastic_foundation}
For positive values of $\d$, elastic coupling intervenes between the transverse strain energy of the bonding layer and the membrane energy of the film, responsible of the asymptotic emergence of a reduced dimension model of a plate over an ``out-of-plane'' elastic foundation.

For ease of read, we first show the result for the case $\d=1$, splitting the proof into several steps.
\begin{enumerate}[i)]
\item \begin{statement}
Convergence of strains.
\end{statement}
Using the definitions of rescaled strains (Equations~\eqref{eqn:resc_str_film} and \eqref{eqn:resc_str_bl}), from the boundedness of sequences $\kue$ and $\kblue$ Lemma~\eqref{lem:conv_scaled_strains}, it follows that there exist constants $C>0$ such that, in the film:
\begin{equation}
\label{eqn:bound_strains_film_done}
\nl[\ette(\ue)]{\Of}\leq  C\e^2, \quad
\nl[\eate(\ue)]{\Of}\leq C\e , \quad \text {and} \quad  
\nl[\eabe(\ue)]{\Of}\leq C,
\end{equation}
and in the bonding layer
\begin{equation}
\label{eqn:bound_strains_bl_done}
\nl[\ette(\ue)]{\Ob}\leq C, \quad
 \nl[\pt \uea]{\Ob} \leq C \e^{-\delta} \quad\text{and}\quad
\e^{\gamma}\nl[\eab(\ue)]{\Ob}\leq C.
\end{equation}
These bounds, in turn, imply that there exist functions $\limeab\in L^2(\OF)$ such that $\eabe(\ue) \wto \limeab$ weakly in $L^2(\OF)$, a function $\limett\in L^2(\OB)$  such that $\ette(\ue)\wto \limett$ weakly in $L^2(\OB)$, and that $e_{i3}^\e(\ue)\to 0$ strongly in $L^2(\OF)$.

\item \begin{statement}
Convergence of scaled displacements.
\end{statement}

Using Lemma~\ref{lem:poincare} (Equation~\eqref{eq:poincare_bulk}) combined with \eqref{eqn:bound_strains_film_done} and~\eqref{eqn:bound_strains_bl_done} we can write:

\begin{equation*}
	\label{eqn:bound_displ_transv_done}
		\nl[\uet]{\O} \leq C \left( \nl[\ett(\ue)]{\Of} + \nl[\ett(\ue)]{\Ob} \right) \leq C (\e^2 + 1)\end{equation*}
from which, combined with \eqref{eqn:bound_strains_film_done}, follows that there exists a function $u_3 \in H^1(\O)$ such that $\partial_3 u_3 = 0$ in $\Of$, and 
\begin{equation}
	\label{eqn:conv_transv_displ_done}
	\uet \wto u_3 \text{ weakly in } H^1(\O). 
\end{equation}

By virtue of Korn's inequality in the quotient space $\dot  H^1(\Of)$ (see \eg, \cite{Ciarlet2013}) there exists $C>0$ such that
\[
	\nl[ u^\e_\alpha]{\dot H^1(\Of)} \leq C \nl[\eabe(u^\e_\alpha)]{L^2(\Of)},
\]
from which, recalling from \eqref{eqn:bound_strains_film_done} and denoting by $\Pi(\cdot)$ the projection operator over the space of rigid motions $\mathcal R(\Of)$, we infer that
  $\nl[ u^\e_\alpha - \Pi\left(u^\e_\alpha\right)]{H^1(\Of)}$ is uniformly bounded and hence, by the Rellich-Kondrachov Theorem that there exists $\ua \in H^1(\Of) \cap \mathcal R(\Of)^\perp$  such that 
 \begin{equation}
 \label{eqn:conv_in_plane_displ_done}
 	u^\e_\a -\Pi(u^\e_\a) \wto \ua\text{ weakly in }H^1(\Of).
 \end{equation}
Using then the second identity in~\eqref{eqn:bound_strains_film_done}, we have that $e_{i3}(u)=0$ in $\Of$, \ie that it belongs to the subspace of Kirchhoff-Love displacements in the film:
\[
  (\ua, u_3)\in \CadKL(\OF):=\left\{ \dot\H^1(\OF)\cap \mathcal(\Of)^\perp \times H^1(\Of), \mathrm{e}_{i3}(v)=0 \inn \OF \right\}.
\]

\item \begin{statement}
Optimality conditions of the scaled strains.
The components $\limkijf\in L^2(\OF)$ of the weak limits 
 of subsequences of $\kue$, and the component $\limkaabl\in L^2(\OB)$ of the weak limit of subsequences of $\kblue$, satisfy the following relations:
\begin{equation}
	\label{eqn:opt_resc_str_film_done}
	\limkttf= -\frac{\laf}{\laf+2 \muf}\limkaaf+ \frac{ 2\muf}{\laf+2 \muf}\Att +  \frac{ \laf}{\laf+2 \muf}\Aaa, 
	\;\limktaf=0,\text{ and }\;
	\limkabf=\eab(u)\quad \inn \Of
\end{equation}
and
\begin{equation}
	\label{eqn:opt_resc_str_bl_done}
	\limkaabl = -\frac{\lab}{\lab + 2 \mub}\limkttbl,\, \inn \Ob.
\end{equation}
\end{statement}
As a consequence of the uniform boundedness of sequences $\ke \ue$ and $\kebl \ue$ in $L^2(\Of; \MtwoSym)$ and $L^2(\Ob; \MtwoSym)$ established in Lemma~\ref{lem:conv_scaled_strains}, it follows that there exist functions $\limkf\in L^2(\Of, \MtwoSym)$ and $\limkbl \in L^2(\Ob; \MtwoSym)$ such that:
\begin{equation}
	\label{eqn:conv_resc_strains_done}
		\ke \ue \wto \limkf \text{ weakly in }L^2(\Of, \MtwoSym), \text{ and } 
	\kebl \ue \wto \limkbl \text{ weakly in }L^2(\Ob, \MtwoSym).
\end{equation}
The relations~\eqref{eqn:opt_resc_str_film_done} are established analogously to the case $\delta=0$, (see step iii) of Theorem~\ref{thm:in_of_plane_elast_found}) and their derivation is not reported here for conciseness.

To establish the optimality conditions~\eqref{eqn:opt_resc_str_bl_done} in the bonding layer, we start from~\eqref{eqn:vf_resc}, using test functions such that $v=0$ in $\OF$, $\vt=0$ in $\OB$ and $\va \in H^1_0(-\hb, 0)$ is a function of $x_3$ alone. For all such functions, dividing the variational equation by $\e$ we get:
\[
	\int_{\Ob} 2\mub\pt \ktablue v_\a' dx_3  = 0,
\]
which in turn yields that $\pt \ktablue=0$ in $\Ob$, \ie that the scaled strain $\ktablue$ is a function of $x'$ alone in $\Ob$.

Choosing test fields in the  variational formulation~\eqref{eqn:vf_resc} such that $\vt=0 \inn \Of, \vt = 0 \inn \Ob$, and $ \va = h_\alpha(x')g_\alpha(x_3)$ in $\Ob$ (no implicit summation assumed), where $h_\alpha(x')\in \H^1(\o), g_\alpha(x_3)\in\H^1_0(-\hb, 0)$, we obtain: 
\[
\begin{multlined}
	\int_\o \left\{ \int_{-\hb}^0  2\mub \ktablue \e h_\a g_\a'\,dx_3 +  \int_{-\hb}^0 \left(\lab \kttblue+ \left( \lab\delta_{\a \b}+2\mub \right) \kabblue \right) \e^\g \pb \va h_\a \,dx_3\right\}dx' =0.
	\end{multlined}
\]

The first term vanishes after integration by parts, using the boundary conditions on $g_\a$ and the fact that $\ktablue h_\alpha$ is a function of $x'$ only.
Dividing by $\e^\g$, we are left with:
\[
 \int_{-\hb}^0 \left[  \int_\o  \left(\lab \kttblue+ \left( \lab\delta_{\a \b}+2\mub \right) \kabblue \right) \pa h_\alpha \,dx' \right]g_\alpha dx_3=0.
  \]
We can use a localization argument owing to the arbitrariness of $g_\alpha$; moreover, since sequences $\kttblue, \kabblue$ converge weakly in $L^2(\Of)$, we can pass to the limit for $\e\to0$ and get for $\aee\ x'\in \o$: 
\[
   \int_\o  \left(\lab \limkttbl+ \left( \lab\delta_{\a \b}+2\mub \right) \limkabbl \right) \pa \vb\,dx'=0.
  \]
After an additional integration by parts, we finally obtain the optimality conditions in the bulk as well as the associated natural boundary conditions, namely:
\[
	\pb\left(\lab \limkttbl + \left( \lab\delta_{\a \b}+2\mub \right) \limkabbl \right) =0 \inn \o, 
	\text{ and }  
	\left(\lab \limkttbl + \left( \lab\delta_{\a \b}+2\mub \right) \limkabbl \right)n_\alpha =0 \onn \partial \o,
\]
where $n_\alpha$ denotes the components of outer unit normal vector to $\partial\o$.
In particular, optimality in the bulk for the diagonal term yields the desired result.

\item \begin{statement}
Limit equilibrium equations.
\end{statement}
We now establish the limit variational equations satisfied by the weak limit $u$.
Considering test functions $v\in H^1(\O)$ such that $v_3=0$ on $\om$ and $\eit(v)=0$ in $\OF$ in the variational formulation of the equilibrium problem~\eqref{eqn:vf_resc}, we get:
\[
	\begin{multlined}
\iof \left\{ \laf \left( \kttue+\kaaue \right) \ebb(v)  
    + 2\muf \kabue\eab(v)\right\}dx
	+ \iob \left( (\laf+2 \muf) \kttblue+\laf \kaablue \right)\ett(v)\,dx\\ 
	+ \iob \left\{  2\muf \ktablue \left( \e\pt \va + \e^{\gamma-1}\pa \vt\right)+ \lab \left( \kttblue+\kaablue \right) \e^\gamma\ebb(v) + 2 \mu \kabblue \e^\gamma \eab(v) \right\} dx\\ 
	= \iof \left\{ \laf \left(\Aaa + \Att \right)\ebb(v) +  2\mu \Aab \eab(v) \right\}dx + \int_{\op} p v_3\,dx',
	\end{multlined}
\]
Using again Lemma~\ref{lem:conv_scaled_strains}, and remarking that since $\gamma-1>0$ then $\e\pt \va$, $\e^{\gamma-1}\pa \vt$, and $\e^{\gamma}\eab(v)$ vanish as $\e \to 0$, we pass to the limit $\e\to0$ and obtain:
\[
\begin{multlined}
\iof \left\{ 
\laf \left( \limkttf+\limkaaf \right) \ebb(v)+ 2\muf \limkabf\eab(v)
 \right\} dx
	+ \iob \left( (\lab+2 \mub) \limkttbl+\lab \limkaabl \right)\ett(v) dx \\
		= \iof \left\{ \laf \left(\Aaa + \Att \right)\ebb(v) +  2\muf \Aab \eab(v) \right\} dx 
	+ \int_{\op} p v_3 \,dx',
\end{multlined}
\]
for all $v\in H^1(\O; \R^3)$ such that $v_3=0$ on $\om$ and $\eit(v)=0$ in $\OF$.
By the definitions of rescaled strains (Equations~\eqref{eqn:resc_str_film} and \eqref{eqn:resc_str_bl}) and plugging optimality conditions~\eqref{eqn:opt_resc_str_film_done} and~\eqref{eqn:opt_resc_str_bl_done}, we get:
\begin{multline}
\label{eqn:vf_lim_3d_done}
\iof \frac{2 \laf \muf}{\laf + 2 \muf} \eaa(u)\ebb(v) 
    + 2\muf \eab(u)\eab(v) dx
	+ \iob \frac{4 \mub (\lab +\mub)}{\lab + 2 \mub}\ett(u)\ett(v) dx \\
		= \iof \left(c_1 \Aaa + c_2 \Att  \right)\ebb(v)  + c_3 \Aab \eab(v) dx+ \int_{\op} p v_3 dx',
	\end{multline}
where the $c_i$'s are coefficients that depend on the elastic material parameters:
\[
c_1=\frac{2 \muf\laf }{\laf+2 \muf}, \quad c_2=\frac{\laf^2 }{\laf+2 \muf}, \quad  c_3=2 \muf. 
\]
Note that they coincide with those of the limit problem in Theorem~\ref{thm:in_of_plane_elast_found} since they descend from the optimality conditions within the film~\eqref{eqn:opt_resc_str_film_done}, which are the same.

\item\begin{statement}
Two-dimensional problem.
\end{statement}

As shown in step i), the limit displacement displacement satisfies $\eit(u)=0$.
Integrating these relations yields that there exist two functions $\eta_3\in H^2(\o)$  and  $\eta_\alpha \in H^1(\o)$, respectively representing the components of the out-of-plane and in-plane displacement of the middle surface of the film layer $\o\times\{\hf/2\}$,  such that $u \in \CadKL(\Of)$  is of the form:
\[
	u_ \alpha=\eta_\alpha(x')-(x_3-\hf/2)\pa\eta_3(x'), \text{ and }  u_3=\eta_3(x'). 
	\]
For such functions the components of the linearized strain read:
\[
	\eab(u)=\eab(\eta)-(x_3+\hf/2)\pab \eta_3 \text{ and }	\ett(u)=\ett(\eta).
\]
Analogously, there exist functions $\zeta_3\in \H^2(\o)$ and $\zeta_ \alpha \in  \H^1(\o)$ such that any admissible test field $v\in \left\{ v_i\in H^1(\O), v_3=0 \onn\ \om, \eit(v)=0 \inn \Of \right\}$ can be written in the form:
\[
	v_3=\begin{cases}
		\zeta_3(x'), &\text{ in } \Of\\
		(x_3+\hb)\zeta_3(x'), &\text{ in } \Ob\\
	\end{cases}, 
	\text{ and }
	v_\alpha=\zeta_\alpha(x')-(x_3+\hf/2) \pa \zeta_3(x'), \text{ in } \Of. 
		\]
The three-dimensional variational equation~\eqref{eqn:vf_lim_3d_done} can be hence rewritten as:
\[
\begin{multlined}
 	\iof \frac{2 \laf \muf}{\laf + 2 \muf} 
 	\left( \eaa(\eta)\ebb(\zeta) +(\paa \eta_3\pbb\zeta_3)(x_3-\hf/2)^2  +(x_3-\hf/2)\left( \eaa(\eta)\pbb\zeta_3 +\paa \eta_3\ebb(\zeta)\right)  \right) dx \\
    + \iof2\muf \left( \eab(\eta)\eab(\zeta)+(\pab \eta_3\pab\zeta_3)(x_3-\hf/2)^2 +(x_3-\hf/2)\left( \eab(\eta)\pab\zeta_3 +\pab \eta_3\eab(\zeta) \right)\right) dx\\ 
	+ \iob \frac{4 \muf (\lab +\muf)}{\lab + 2 \muf}\ett(\eta)\ett(\zeta)   dx
		= \iof \left\{ \left(c_1 \Aaa + c_2 \Att  \right)\ebb(\zeta)  + c_3 \Aab \eab(\zeta) \right\} dx + \int_{\o} p \zeta_3 dx, 
\end{multlined}
\] 
for all functions $\zeta_\alpha \in  \H^1(\o)$ and $\zeta_3\in \H^2(\o)$.
The dependence on $x_3$ is now explicit; after integration along the thickness the linear cross terms vanish in the film, and we are left with the two-dimensional variational formulation of the equilibrium equations:

\[
\begin{multlined}
	 	\io \frac{2 \laf \muf}{\laf + 2 \muf} 
 	\left\{
 		\eaa(\eta)\ebb(\zeta) +1/6(\paa \eta_3\pbb\zeta_3)  
 		 	\right\} dx'
    + \io 2\muf \left\{ \eab(\eta)\eab(\zeta)+1/6(\pab \eta_3\pab\zeta_3)
          \right\}dx'\\ 
	+ \io \frac{4 \mub (\lab +\mub)}{\lab + 2 \mub}\eta_3\zeta_3 dx'
		= \io \left\{ \left(c_1 \Aaa + c_2 \Att  \right)\ebb(\zeta)  + c_3 \Aab \eab(\zeta) \right\}dx' + \int_{\o} p \zeta_3dx', 
\end{multlined}
\]
for all functions $\zeta_\alpha \in  \H^1(\o)$ and $\zeta_3\in \H^2(\o)$.
By taking $\zeta_\alpha=0$ (resp. $\zeta_3=0$) the previous equation is broken down into two, two-dimensional variational equilibrium equations: the flexural and membrane  equilibrium equations of a Kirchhoff-Love plate over a transverse linear, elastic foundation.
They read:
\[
\begin{multlined}
	 	\io \left\{ \frac{2 \laf \muf}{\laf + 2 \muf} 
 	 		\eaa(\eta)\ebb(\zeta) 
 		 		 	    + 2\muf  \eab(\eta)\eab(\zeta) \right\} dx'
           			= \io
	 \left\{ \vphantom{\frac{\lambda}{\mu}}\left(	c_1 \Aaa + c_2 \Att  \right) \ebb(\zeta)  c_3 \Aab \eab(\zeta) \right\}dx', \\
		 \forall \zeta_\alpha \in  \H^1(\o),
\end{multlined}	
\]

\[
\begin{multlined}
	 	\io \left\{ \frac{ \laf \muf}{3(\laf + 2 \muf)} 
 	 		 		(\paa \eta_3\pbb\zeta_3)  
 		 	    +\frac{\muf}{3} 
        \pab \eta_3\pab\zeta_3
          	+ \frac{4 \mub (\lab +\mub)}{\lab + 2 \mub}\eta_3\zeta_3 \right\} dx'  
		=
	 				 \int_{\o} p \zeta_3 \,dx',\quad
	 \forall \zeta_3 \in \H^2(\o).
\end{multlined}	
\]

  To complete the proof in the case $0<\d<1$, it is sufficient to rescale transverse displacements within the bonding layer by a factor $\e^{1-\d}$, that is considering displacements of the form:
 \[
 	 	(\e\uea, \e^{1-\d}\uet) \inn \Ob 
 \]
	instead of \eqref{eqn:resc_displ}.
Then the estimates on the scaled strains leading to Lemma~\ref{lem:conv_scaled_strains}, as well as the arguments that follow, hold verbatim.
 \item \begin{statement}
 Strong convergence in $H^1(\Of)$
 \end{statement}
 The strong convergence $(\uea - \Pi(\uea), \uet)\to (\ua, \ut)$ in $H^1(\Of)$ is proved analogously to the case $\d=1$ (see step vi) in the proof of Theorem~\ref{thm:in_of_plane_elast_found}) and is not repeated here for conciseness.
\end{enumerate}
\qed

\makeatletter{}
We have studied the asymptotic behavior of a non-homogeneous, linear, elastic bi-layer in scalar elasticity.
Whenever a three-dimensional layer is ``thin'', its thickness-to-diameter ratio $\e$, appears naturally as a small parameter in the variational formulation of the equilibrium; it determines a \emph{singular perturbation} on the underlying problem of elasticity. 

On assuming a general scaling law for the two parameters, namely thickness and stiffness ratios, upon which the variational formulation of the elasticity problem solely depends,
we have unveiled and characterized the asymptotic regimes arising in the limit $\e\to 0$, that is solving the singular perturbation.
The asymptotic regimes synthetically resumed in the two-dimensional phase diagram of Figure~\ref{fig:phase_diag}, depending on stiffness and thickness ratio or equivalently on the two non-dimensional parameters $\g, \d$ representing the ratio of membrane energies and shear to membrane energy of the two layers, respectively.

The asymptotic limit regimes can also be hierarchically characterized by the order of magnitude (with respect to $\e$) of the leading term of the limit displacement $u$, \ie the order of the first non-trivial term in a possible power expansion with respect to $\e$. 

We identify the regime of membranes over a medium unable to transfer vertical stresses; that of bars under shear with added superficial stiffness; membranes over a three-dimensional body; higher order linear membranes under shear; and linear membranes over linear elastic foundation.
The latter regime is of particular interest: the asymptotic analysis rigorously justifies the widely adopted linear elastic foundation model (\ie the ``Winkler foundation'' or ``shear lag'' model).
It is further established a \emph{class of equivalence} of three-dimensional elastic bi-layers having the same limit representation, giving insight into the nature and validity of the aforementioned reduced model. In addition, an explicit equation allows to identify the single parameter identifying the linear foundation model, as a function of the three-dimensional material and geometric parameters.

The asymptotic study is performed in the simplified setting of linearized scalar elasticity. 
Since we are mainly interested into the in-plane behavior, for the reasons illustrated in the body of the work, this setting is rich enough to unveil the basic elastic energy coupling mechanisms.
The same does not hold if we were interested in the out-of-plane behavior, \ie to study bending effects.
However, a similar asymptotic analysis could be carried with the same spirit, at the expense of a more involved analytic treatment, in the framework of linear, three-dimensional, vectorial, elasticity. 
In both cases, the assumption of linearized elasticity is delicate. In the context of genuinely nonlinear elasticity, indeed, the order of magnitude of the applied loads plays a crucial role in determining the limit asymptotic regimes unlike in the linear case, where it can be transparently rescaled.

The analysis presented here is an effort to show how reduced-dimension models, often regarded as constitutive, phenomenological models, can rigorously derive and be justified from genuine three-dimensional elasticity. 
This, not only provides a mathematically sound treatment, but also gives insight into the fundamental elastic mechanisms and the nature of their coupling, further supplying the range of validity of the reduced models and hence an essential indication for their practical application.

\section*{Acknowledgements}
	B. Bourdin's work was supported in part by the National Science Foundation under the grant DMS-1312739. A. A. León Baldelli's work was supported by the Center for Computation \& Technology at LSU.

\bibliographystyle{plain}
\bibliography{biblio}

\begin{thebibliography}{10}

\bibitem{Amabili1998}
M~Amabili, MP~Pa\"idoussis, and AA~Lakis.
\newblock {Vibrations of partially filled cylindrical tanks with
  ring-stiffeners and flexible bottom}.
\newblock {\em Journal of Sound and Vibration}, 213:259--299, 1998.

\bibitem{Audoly2008a}
Basile Audoly and Arezki Boudaoud.
\newblock {Buckling of a stiff film bound to a compliant substrate—Part I:
  Formulation, linear stability of cylindrical patterns, secondary
  bifurcations}.
\newblock {\em Journal of the Mechanics and Physics of Solids},
  56(7):2401--2421, July 2008.

\bibitem{Audoly2008b}
Basile Audoly and Arezki Boudaoud.
\newblock {Buckling of a stiff film bound to a compliant substrate—Part II: A
  Global Scenario for the Formation of the Herringbone pattern}.
\newblock {\em Journal of the Mechanics and Physics of Solids},
  56(7):2422--2443, July 2008.

\bibitem{Audoly2008c}
Basile Audoly and Arezki Boudaoud.
\newblock {Buckling of a stiff film bound to a compliant substrate—Part III:
  Herringbone solutions at large buckling parameter}.
\newblock {\em Journal of the Mechanics and Physics of Solids},
  56(7):2422--2443, July 2008.

\bibitem{Chen2003}
Xing-Chong Chen and Yuan-Ming Lai.
\newblock {Seismic response of bridge piers on elasto-plastic Winkler
  foundation allowed to uplift}.
\newblock {\em Journal of Sound and Vibration}, 266(5):957--965, October 2003.

\bibitem{Ciarlet1996c}
PG~Ciarlet and V\'{e}ronique Lods.
\newblock {Asymptotic analysis of linearly elastic shells. I. Justification of
  membrane shell equations}.
\newblock {\em Archive for rational mechanics and analysis}, 136(2):119--161,
  December 1996.

\bibitem{Ciarlet1997}
Philippe~G. Ciarlet.
\newblock {\em {Mathematical Elasticity Volume II: Theory of Plates}}.
\newblock North-Holland, Amsterdam, 1997.

\bibitem{Ciarlet2013}
Philippe~G Ciarlet.
\newblock {\em {Linear and Nonlinear Functional Analysis with Applications}}.
\newblock SIAM, 2013.

\bibitem{Ciarlet1979}
Philippe~G. Ciarlet and Philippe Destuynder.
\newblock {A Justification of a Nonlinear Model in Plate Theory}.
\newblock {\em Computer Methods in Applied Mechanics and Engineering}, 18,
  1979.

\bibitem{Cox1952a}
H~L Cox.
\newblock {The elasticity and strength of paper and other fibrous materials}.
\newblock {\em British Journal of Applied Physics}, 3(3):72--79, March 1952.

\bibitem{Filonenko-Borodich1940}
M.~M. Filonenko-Borodich.
\newblock {Some approximate theories of the elastic foundation}.
\newblock {\em Uchenyie Zapiski Moskovskogo Gosudarstvennogo Universiteta
  Mekhanica}, 46:3--18, 1940.

\bibitem{Friesecke2006}
Gero Friesecke, Richard~D James, and Stefan M\"{u}ller.
\newblock {A Hierarchy of Plate Models Derived from Nonlinear Elasticity by
  Gamma-Convergence}.
\newblock {\em Archive for Rational Mechanics and Analysis}, 180(2):183--236,
  January 2006.

\bibitem{Gerolymos2006}
Nikos Gerolymos and George Gazetas.
\newblock {Winkler model for lateral response of rigid caisson foundations in
  linear soil}.
\newblock {\em Soil Dynamics and Earthquake Engineering}, 26(5):347--361, May
  2006.

\bibitem{Geymonat1987a}
Giuseppe Geymonat, Fran\c{c}oise Krasucki, and Jean-Jacques Marigo.
\newblock {Sur la commutativit\'{e} des passages \`{a} la limite en th\'{e}orie
  asymptotique des poutres composites}.
\newblock {\em Comptes Rendus de l'Acad\'{e}mie des Sciences - Series II -
  M\'{e}canique des solides}, pages 225--228, 1987.

\bibitem{Hutchinson1990}
JW~Hutchinson and HM~Jensen.
\newblock {Models of fiber debonding and pullout in brittle composites with
  friction}.
\newblock {\em Mechanics of Materials}, 9:139--163, 1990.

\bibitem{Jiang2008}
Guoliang Jiang and Kara Peters.
\newblock {A shear-lag model for three-dimensional, unidirectional multilayered
  structures}.
\newblock {\em International Journal of Solids and Structures},
  45(14-15):4049--4067, July 2008.

\bibitem{Kleckner2004}
Nancy Kleckner, Denise Zickler, Gareth~H Jones, Job Dekker, Ruth Padmore, Jim
  Henle, and John Hutchinson.
\newblock {A mechanical basis for chromosome function.}
\newblock {\em Proceedings of the National Academy of Sciences of the United
  States of America}, 101(34):12592--7, August 2004.

\bibitem{LeonBaldelli2014}
A.~A. {Le\'{o}n Baldelli}, J.-F. Babadjian, B.~Bourdin, D.~Henao, and
  C.~Maurini.
\newblock {A variational model for fracture and debonding of thin films under
  in-plane loadings}.
\newblock {\em Journal of the Mechanics and Physics of Solids}, June 2014.

\bibitem{LeonBaldelli2012}
Andr\'{e}s~Alessandro {Le\'{o}n Baldelli}, Blaise Bourdin, Jean-Jacques Marigo,
  and Corrado Maurini.
\newblock {Fracture and debonding of a thin film on a stiff substrate:
  analytical and numerical solutions of a one-dimensional variational model}.
\newblock {\em Continuum Mechanics and Thermodynamics}, 25(2-4):243--268, May
  2013.

\bibitem{Lions1973}
Jacques-Louis Lions.
\newblock {\em {Perturbations Singulieres dans les Problemes aux Limites}}.
\newblock Springer-Verlag, Berlin, Heidelberg, New York, 1973.

\bibitem{Marigo2005a}
Jean-Jacques Marigo and Kamyar Madani.
\newblock {The influence of the type of loading on the asymptotic behavior of
  slender elastic rings}.
\newblock {\em Journal of Elasticity}, 75(2):91--124, January 2005.

\bibitem{Mesgarnejad}
Ataollah Mesgarnejad, Blaise Bourdin, and M.M. Khonsari.
\newblock {A variational approach to the fracture of brittle thin films subject
  to out-of-plane loading}.
\newblock {\em Journal of the Mechanics and Physics of Solids}, pages 1--32,
  May 2013.

\bibitem{Nairn1997}
JA~Nairn.
\newblock {On the use of shear-lag methods for analysis of stress transfer in
  unidirectional composites}.
\newblock {\em Mechanics of Materials}, 26:63--80, 1997.

\bibitem{Nairn2001}
J.a. Nairn and D.a. Mendels.
\newblock {On the use of planar shear-lag methods for stress-transfer analysis
  of multilayered composites}.
\newblock {\em Mechanics of Materials}, 33(6):335--362, June 2001.

\bibitem{Paliwal1996}
D.N. Paliwal, Rajesh~Kumar Pandey, and Triloki Nath.
\newblock {Free vibrations of circular cylindrical shell on Winkler and
  Pasternak foundations}.
\newblock {\em International Journal of Pressure Vessels and Piping},
  69(1):79--89, November 1996.

\bibitem{Pasternak1955}
P.~L. Pasternak.
\newblock {On a New Method of Analysis of an Elastic Foundation by Means of Two
  Foundation Constants}.
\newblock {\em Gosudarstvennoye Izdatel'stvo Literatury po Stroitel'stvu I
  Arkhitekture}, 1955.

\bibitem{Pradhan2009}
S.C. Pradhan and J.K. Phadikar.
\newblock {Nonlocal elasticity theory for vibration of nanoplates}.
\newblock {\em Journal of Sound and Vibration}, 325(1-2):206--223, August 2009.

\bibitem{Pradhan2011}
S.C. Pradhan and G.K. Reddy.
\newblock {Buckling analysis of single walled carbon nanotube on Winkler
  foundation using nonlocal elasticity theory and DTM}.
\newblock {\em Computational Materials Science}, 50(3):1052--1056, January
  2011.

\bibitem{Sica2013}
Stefania Sica, George Mylonakis, and Armando~Lucio Simonelli.
\newblock {Strain effects on kinematic pile bending in layered soil}.
\newblock {\em Soil Dynamics and Earthquake Engineering}, 49:231--242, June
  2013.

\bibitem{Winkler1867}
Emil Winkler.
\newblock {\em {Die Lehre von der Elasticit\"{a}t und Festigkeit}}.
\newblock 1867.

\bibitem{Xia2000}
Z~Cedric Xia and John~W Hutchinson.
\newblock {Crack patterns in thin films}.
\newblock {\em Journal of the Mechanics and Physics of Solids}, 48:1107--1131,
  2000.

\bibitem{Zafeirakos2013}
A~Zafeirakos, N~Gerolymos, and V~Drosos.
\newblock {Incremental dynamic analysis of caisson–pier interaction}.
\newblock {\em Soil Dynamics and Earthquake Engineering}, 48:71--88, May 2013.

\end{thebibliography}

\end{document}